\newcommand{\TT}{{\cal T}}
\newcommand{\PP}{{\cal P}}
\newcommand{\DD}{{\cal D}}
\newcommand{\TB}{{\cal T_B}}
\newcommand{\ignore}[1]{{}}
\newenvironment{proof}{\par\noindent{\bf Proof:}}{\mbox{}\hfill$\Box$\\}
\newtheorem{definition}{Definition}[section]
\newtheorem{theorem}{Theorem}[section]
\newtheorem{lemma}{Lemma}[section]
\begin{document}
\date{}

\title{Dynamic DFS in Undirected Graphs: breaking the $O(m)$ barrier
\thanks{The preliminary version of the paper appeared in SODA 2016.}}

%
\newcommand{\specificthanks}[1]{$\ddagger$}

\author{Surender Baswana	
\textsuperscript{\specificthanks{2}}
\thanks{This research was partially supported 
by {\em UGC-ISF} (the University Grants Commission of India \& Israel Science Foundation) and
{\em IMPECS} (the Indo-German Max Planck Center for Computer Science) .}
\and Shreejit Ray Chaudhury 
\thanks{Dept. of CSE, I.I.T. Kanpur, India (www.cse.iitk.ac.in), 
email: \{sbaswana,keerti,shahbazk\}@cse.iitk.ac.in, shreejit.1@gmail.com.}
\and Keerti Choudhary 
\textsuperscript{\specificthanks{2}}
\thanks{This research was partially supported by Google India under the Google 
India PhD Fellowship Award.}
\and Shahbaz Khan 
\textsuperscript{\specificthanks{2}}\footnotemark[4]
}


\maketitle

Depth first search (DFS) tree is a fundamental data structure for solving various problems in graphs. It is well known that it takes $O(m+n)$ time to build a DFS tree for a given undirected graph $G=(V,E)$ on $n$ vertices and $m$ edges. We address the problem of maintaining a DFS tree when the graph is undergoing {\em updates} (insertion and deletion of vertices or edges). We present the following results for this problem.

\begin{enumerate}
\item {\em Fault tolerant DFS tree}:
There exists a data structure of size $\tilde{O}(m)$ \footnote{$\tilde{O}()$ hides the poly-logarithmic factors.} such that given any set ${\cal F}$ of failed vertices or edges, a DFS tree of the graph $G\setminus {\cal F}$ can be reported in $\tilde{O}(n|{\cal F}|)$ time. 

\item {\em Fully dynamic DFS tree}:
There exists a fully dynamic algorithm for maintaining a DFS tree that takes worst case $\tilde{O}(\sqrt{mn})$ time per update  for any arbitrary online sequence of updates.

\item {\em Incremental DFS tree}:
There exists an incremental algorithm for maintaining a DFS tree that takes worst case $\tilde{O}(n)$ time per update for any arbitrary online sequence of edge insertion.
\end{enumerate}

These are the first $o(m)$ worst case time results for maintaining a DFS tree in a dynamic environment. Moreover, our fully dynamic algorithm provides, in a seamless manner, the first deterministic algorithm with $O(1)$ query time and $o(m)$ worst case update time for connectivity, biconnectivity, and 2-edge connectivity in the dynamic subgraph model. 

\noindent
\textbf{Keywords: }
  Depth First Search, DFS, Dynamic Graph Algorithm

\section{Introduction}
Depth First Search (DFS) is a well known graph traversal technique. Right from the seminal work of Tarjan \cite{Tarjan72}, DFS traversal 
has played the central role in the design of efficient algorithms for many fundamental graph problems, namely, biconnected components~\cite{Tarjan72}, 
strongly connected components~\cite{Tarjan72}, topological sorting,
bipartite matching \cite{HopcroftK73}, dominators in directed graph \cite{Tarjan74} and planarity testing \cite{HopcroftT74}.
Interestingly, the role of DFS traversal is not confined to merely the design of efficient algorithms. 
For example, consider the classical result of Erd{\H o}s and R{\' e}nyi \cite{ErdosR60} for the 
phase transition phenomena in random graphs. There exist many proofs of this result which are intricate and based on highly sophisticated probability tools. However, recently, Krivelevich and Sudakov \cite{KrivelevichS13}  designed a truly simple, short, and elegant proof for this result based on the insights from a DFS traversal in a graph.

Let $G=(V,E)$ be an undirected connected graph on $n$ 
vertices and $m$ 
edges. DFS traversal of $G$ starting from any vertex $r\in V$ 
produces a rooted spanning tree, 
called a DFS tree with $r$ as its root. 
It takes $O(m+n)$ time to perform a DFS traversal and generate a DFS tree. 
Given any rooted spanning tree of graph $G$, all non-tree edges of the graph
can be 
classified into two categories, namely, back edges and cross edges
as follows. A non-tree edge is called a {\it back edge} if one of its endpoints is an ancestor of the other in the tree. Otherwise, it is called a
{\it cross edge}. A necessary and sufficient condition for any rooted 
spanning tree to be a DFS tree is that every non-tree edge is a back edge.
Thus, it can be seen that many DFS trees are possible for any given graph. 
However, if the traversal of the graph is performed according to the order specified by the adjacency lists of the graph, the resulting DFS tree will
be unique. The ordered DFS tree problem
is to compute the order in which the vertices get visited when the traversal
is performed strictly according to the adjacency lists.

Most of the graph applications in real world deal with graphs that keep 
changing with time. These changes/updates can be in the form of insertion 
or deletion of vertices or edges. An algorithmic graph problem is modeled in a dynamic environment as follows. There is an online sequence of 
updates on the graph, and the objective is to update the solution of the problem efficiently after each update. 
In particular, the time taken to update the solution has to be much smaller than that of the best static algorithm for the problem. 
In the last two decades, many elegant dynamic algorithms have been designed for various graph problems such as connectivity \cite{EppsteinGIN97,HenzingerK99,HolmLT01,KapronKM13},
reachability \cite{RodittyZ08,Sankowski04},
shortest path \cite{DemetrescuI04,RodittyZ12}, spanners \cite{BaswanaKS12,GottliebR08,Roditty12}, and min-cut \cite{Thorup07}.
Another, and more restricted, variant of a dynamic environment is the fault tolerant environment. Here the aim is to build a compact
data structure for a given problem, that is resilient to failure of vertices/edges, and can efficiently report the 
solution of the problem for any given set of failures. 
There has been a lot of work in the last two decades on fault tolerant algorithms for connectivity 
\cite{ChanPR08,Duan10,FrigioniI00}, shortest paths \cite{BaswanaK13,ChechikLPR12,DemetrescuTCR08}, and spanners
\cite{BraunschvigCPS15,ChechikLPR10}.

A dynamic graph algorithm is said to be \textit{fully dynamic} if it handles both 
insertion as well as deletion updates. A partially dynamic algorithm is said
to be \textit{incremental} or \textit{decremental} if it handles only insertion or only deletion updates respectively. 
In this paper, we address the problem of maintaining a DFS tree efficiently in any dynamic environment.  

\subsection{Existing results on dynamic DFS}
In spite of the simplicity of a DFS tree, designing any efficient parallel 
or dynamic algorithm for a DFS tree has turned out to be quite challenging. Reif \cite{Reif85} 
showed that the ordered DFS tree problem is a $P$-Complete problem. 
For many years, this result seemed to imply that the general DFS tree problem,
that is, the computation of any DFS tree, is also inherently sequential. 
However, Aggarwal and Anderson \cite{AggarwalA88} proved that the general
DFS tree problem is in {\em RNC} by designing a parallel randomized algorithm that takes 
$O(\log^3 n)$ expected time. 
Further, the fastest parallel deterministic algorithm for general DFS tree still 
takes $O(\sqrt{n})$ time \cite{AggarwalAK90,GoldbergPV88}.  
Whether the general DFS tree problem is in {\em NC} for directed (or undirected) graphs is still a 
long standing open problem.

Reif \cite{Reif87} and later Miltersen et al. \cite{MiltersenSVT94} proved 
that $P$-Completeness of a problem also implies hardness of the problem in 
the dynamic setting. The work of Miltersen et al. \cite{MiltersenSVT94} 
shows that if the ordered DFS tree is updateable in $O({\mathbf{polylog}}(n))$ time, 
then the solution of every problem in class $P$ is updateable in 
$O({\mathbf{polylog}}(n))$ time. In other words, maintaining the ordered DFS 
tree is indeed the hardest among all the problems in class $P$.
In our view, this hardness result, which is actually for only the ordered
DFS tree problem, has proved to be quite discouraging for the researchers 
working in the area of dynamic algorithms.
This is evident from the fact that for all the static graph problems that
were solved using DFS traversal in the 1970's, none of their dynamic counterparts
used a dynamic DFS tree \cite{HenzingerK99,HolmLT01,KapronKM13,RodittyZ08,Chan06,ChanPR08,Duan10}.

Apart from the hardness of the ordered DFS tree problem in dynamic environment, 
very little progress has been achieved even for the problem
of maintaining any DFS tree. 
Franciosa et al. \cite{FranciosaGN97} designed an incremental algorithm for
a DFS tree in a directed acyclic graph (DAG). 
For any arbitrary sequence of edge insertions, this algorithm takes $O(mn)$ total time to 
maintain a DFS tree from a given source. 
Recently, Baswana and Choudhary \cite{BaswanaC15} designed a decremental algorithm
for a DFS tree in a DAG that requires expected $O(mn\log n)$ total time.
For undirected graphs, recently 
Baswana and Khan \cite{BaswanaK14} designed an incremental algorithm for maintaining a DFS tree
requiring $O(n^2)$ total time. 
These algorithms are the only results known for the dynamic DFS tree problem. 
Moreover, none of these existing algorithms, though designed for only a partially dynamic environment, 
achieves a worst case bound of $o(m)$ on the update time.
%
Furthermore, none of these results proves that general DFS is not as hard as ordered DFS in the dynamic environment. 
This is because the speculations of having to incur a complete recomputation in the worst case after an update 
is not disproved by amortized bounds resulting in the perceived $O(m)$ {\em barrier} for general DFS as well.  
So the following intriguing questions remained unanswered till date:
\begin{itemize}
\item
Does there exist any fully dynamic algorithm for maintaining
a DFS tree?
\item 
Is it possible to achieve worst case $o(m)$ update time for maintaining a DFS tree in a dynamic environment?
\end{itemize}

Not only do we answer these open questions affirmatively for undirected graphs, 
we also use our dynamic algorithm for DFS tree to provide efficient 
solutions for a couple of well studied dynamic graph problems. 
Moreover, our results also handle vertex updates which are generally considered harder than
edge updates.
Furthermore, our results finally prove that general DFS is indeed not as hard as ordered DFS in the dynamic setting
as was the case in parallel setting.
%
%

\subsection{Our results}
%
We consider a generalized notion of updates wherein an update could be either
insertion/deletion of a vertex or insertion/deletion of an edge. For any set
$U$ of such updates, let $G+U$ denote the graph obtained after performing 
the updates $U$ on the graph $G$. Our main result can be succinctly
described in the following theorem.
\begin{theorem} An undirected graph can be preprocessed to build a data 
structure of $O(m \log n)$ size such that for any set $U$ of $k\leq n$ updates, a DFS tree of $G+U$ can be reported in $O(nk \log^4 n)$ 
time.
\label{main-result}
\end{theorem}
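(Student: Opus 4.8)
The plan is to reduce the reconstruction of a DFS tree of $G+U$ to a guided traversal over a \emph{precomputed} DFS tree $T_0$ of $G$, so that the $m$ non-tree edges are never scanned one by one. First I would preprocess $G$: compute $T_0$ in $O(m+n)$ time, store its Euler tour together with ancestor and level-ancestor structures, and build an auxiliary data structure $\mathcal{D}$ over the non-tree edges. The role of $\mathcal{D}$ is to answer, for a query pair consisting of a vertex set $S$ forming a union of subtrees of $T_0$ and an ancestor chain $\mathcal{P}$ in $T_0$, the edge of $G$ joining $S$ to $\mathcal{P}$ whose endpoint on $\mathcal{P}$ is deepest. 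Encoding each subtree and each ancestor chain as $O(1)$ ranges in the Euler tour lets $\mathcal{D}$ be a segment/range tree of size $O(m\log n)$ that answers such a query in $\tilde{O}(1)$ time.

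Second, the reconstruction itself. Deleting the failed vertices and edges of $U$ cuts $T_0$ into a forest; together with the endpoints of the $O(k)$ inserted edges this distinguishes a set of $O(k)$ vertices, and the traversal need only track the $O(k)$ pieces they induce rather than individual vertices. I would simulate a DFS of $G+U$ at the granularity of these pieces, maintaining the partially built tree $T^*$ as a root-to-current-vertex path $\mathcal{P}$ plus a collection of still-unvisited connected components, each a union of original subtrees of $T_0$. The central combinatorial fact I would establish is a \emph{components property}: because $\mathcal{P}$ is always a single ancestor chain, every edge from an unvisited component $C$ to $\mathcal{P}$ is incident to an ancestor of the deepest such vertex $x^\ast$; hence if we attach $C$ to $T^\ast$ by hanging its matching endpoint $y^\ast$ below $x^\ast$ (the edge returned by $\mathcal{D}$) and then continue the DFS inside $C$, all edges between $C$ and $\mathcal{P}$ automatically become back edges. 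On entering $C$ at $y^\ast$ we reroot the relevant original subtree at $y^\ast$; the subtrees that hang off the reversed entry path are \emph{not} guaranteed to remain valid, so they are released as fresh components and re-attached recursively by the same rule --- this recursion is precisely what repairs the cross edges that naive rerooting would otherwise create.

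Third, correctness and complexity. For correctness I would argue inductively that $T^\ast$ never contains a cross edge: intra-component edges are handled by the recursion, component-to-path edges by the components property, and edges between two distinct unvisited components cannot become cross edges because the component entered first has its entire future confined below the attachment point of the other. For the running time I would charge one path-extension or component-attachment step to each original vertex, giving $O(n)$ primitive steps; each step locates the deepest connecting edge over the $O(k)$ current pieces, costing $O(k)$ queries to $\mathcal{D}$ at $\tilde{O}(1)$ each, and maintaining the component partition under these splits contributes a further logarithmic factor, for a total of $O(nk\log^4 n)$. The space is dominated by $\mathcal{D}$, namely $O(m\log n)$.

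The hard part will be twofold. The delicate combinatorial point is the correctness of the recursive rerooting: I must show that treating the subtrees hanging off each reversed entry path as independent unvisited components, and always re-entering through the deepest available edge, suffices to eliminate \emph{all} cross edges. Naive path reversal fails --- two former back edges can turn into a cross edge --- so the proof must track exactly which original back edges are endangered by a rerooting and verify that the recursion heals each one. The main algorithmic obstacle is engineering $\mathcal{D}$ so that the ``deepest connecting edge between a union of subtrees and an ancestor chain'' query runs in polylogarithmic time while the pieces are being dynamically split during a single reconstruction, since it is this query --- together with the bound of $O(k)$ tracked pieces in place of $\Theta(m)$ edges --- that is responsible for breaking the $O(m)$ barrier.
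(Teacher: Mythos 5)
Your overall architecture matches the paper's: a segment tree over a preorder list of $T$ supporting ``extreme edge from a subtree to an ancestor--descendant path'' queries in polylog time, a partition of the surviving part of $T$ into paths and subtrees, a DFS simulated at the granularity of these pieces that uses the components property to build sparse reduced adjacency lists, and recursive release of the subtrees hanging off each extracted path. One small technical slip: an ancestor--descendant path is \emph{not} $O(1)$ contiguous ranges in an Euler or preorder tour; the paper needs heavy-light decomposition to express it as $O(\log n)$ ranges (only a subtree is a single range). This merely costs logarithmic factors you have already budgeted.

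The genuine gap is in your accounting. Deleting $k$ vertices and edges does \emph{not} cut $T$ into $O(k)$ pieces --- deleting one high-degree vertex already produces $\Theta(n)$ subtrees --- so the claim that each step issues only $O(k)$ queries ``over the $O(k)$ current pieces'' is false as stated. The paper's partition guarantees at most $k$ \emph{paths} but arbitrarily many \emph{subtrees}, and the $\tilde{O}(nk)$ bound rests on two devices your proposal omits. First, because $T$ is a DFS tree there are no edges between two distinct subtrees of the partition; hence when the traversal extracts a path from inside a subtree it need only query the subtrees newly spawned along that path, so each subtree is queried $O(1)$ times in this mode and these contribute only $O(n)$ edges in total. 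Second, when the traversal enters one of the $\le k$ surviving \emph{paths}, the extracted segment may have edges to \emph{every} outstanding subtree and all of them must be queried; to keep the number of such events at $O(k\log n)$ the paper always walks to the \emph{farther} end of the path (path halving, as in Aggarwal--Anderson), so each path is re-entered only $O(\log n)$ times. Without path halving a path of length $\Theta(n)$ can be re-entered $\Theta(n)$ times, each re-entry forcing up to $\Theta(n)$ subtree queries, and the total degrades to $\Theta(n^2)$ regardless of $k$. Your correctness sketch is essentially the paper's (every skipped edge is covered by a retained edge into the same unvisited component, and inter-subtree edges simply do not exist), but the claimed running time does not follow until these two ingredients are added.
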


With this result at the core, we easily obtain the following 
results for dynamic DFS tree in an undirected graph.
\begin{enumerate}
\item {\em Fault Tolerant DFS tree}:

Given any set of $k$ failed vertices or edges, 
we can report a DFS tree 
for the resulting graph in $O(nk \log^4 n)$ time.

%
%

\item {\em Fully Dynamic DFS tree}:

Given any arbitrary online sequence of vertex or edge updates, we can maintain a DFS tree in $O(\sqrt{mn} \log^{2.5} n)$ worst case time per update. 


\item {\em Incremental DFS tree}:

Given any arbitrary online sequence of edge insertions, we can maintain
a DFS tree in $O(n \log^{3} n)$ worst case time per edge insertion. 


\end{enumerate} 

These are the first $o(m)$ worst case update time algorithms for maintaining a DFS tree in a dynamic environment. 
%
Recently, there has been significant work
\cite{AbboudW14, HenzingerKNS15} on establishing conditional lower bounds on the time complexity of various dynamic graph problems.
A simple reduction from [1], based on the Strong Exponential Time Hypothesis (SETH), implies a conditional lower bound of  ${\Omega}(n)$ on the update time of any fully dynamic algorithm for a DFS tree under vertex updates. 
We also present an unconditional lower bound of $\Omega(n)$ for
maintaining a fully dynamic DFS tree explicitly under edge updates.



\subsection{Applications of Fully Dynamic DFS}

In the static setting, a DFS tree can be easily used to answer connectivity, 2-edge connectivity and biconnectivity queries.
Our fully dynamic algorithm for DFS tree thus seamlessly solves these problems for both vertex and edge updates.
Further, our result gives the first deterministic algorithm with $O(1)$ query time and $o(m)$ worst case update time 
for several well studied variants of these problems in the dynamic setting.
These problems include dynamic subgraph connectivity \cite{ChanPR08,Duan10,EppsteinGIN97,Frederickson85,HolmLT01,KapronKM13} 
and vertex update versions of dynamic biconnectivity \cite{Henzinger00,Henzinger95,HolmLT01} and dynamic 2-edge connectivity \cite{HolmLT01,EppsteinGIN97,Frederickson85}. 
The existing results offer different trade-offs between the update time and the query time, and differ on the types 
(amortized or worst case) of update time and the types (deterministic or randomized) of query time. 
Our algorithm, in particular, improves the deterministic worst 
case bounds for these problems, thus demonstrating 
the relevance of
DFS trees in solving dynamic graph problems.

\subsection{Main Idea}
Let $T$ be a DFS tree of $G$. To compute a DFS tree of $G+U$ for a given set
$U$ of updates, the main idea is to make use of the original tree $T$ itself. We preprocess the
graph $G$ using tree $T$ to build a data structure ${\cal D}$. In order
to achieve $o(m)$ update time, our algorithm makes use of ${\cal D}$ to create a {\it reduced} adjacency list
for each vertex such that performing DFS traversal using these lists gives a DFS tree for  $G+U$. 
In fact, these reduced adjacency lists are generated on the fly and are guaranteed to have only $\tilde{O}(n|U|)$ edges. 

We now give an outline of the paper.
In section \ref{sec:prelim}, we describe various notations used throughout the paper.
Section \ref{sec:Overview1} describes an algorithm to report the DFS tree after a single update in the graph.
The details of the required data structure $\DD$ are described in Section \ref{sec:data-structure}.
Then in Section \ref{sec:Overview2}, we provide an overview of our algorithm for handling 
multiple updates, highlighting the main intuition behind our approach.
Our main algorithm (Theorem \ref{main-result}) that reports a DFS tree after 
any set of updates in the graph is described in Section \ref{sec:algo}.
In Section \ref{sec:fullyDyn} we convert this algorithm to fully dynamic and incremental algorithms 
for maintaining a DFS tree using the \textit{overlapped periodic rebuilding} technique. 
Finally, in Section \ref{sec:appn} and Section \ref{sec:LowerBounds} we describe the applications and 
lower bounds of dynamic DFS trees. 

\section{Preliminaries}
\label{sec:prelim}

Let $U$ be any given set of updates. We add a dummy 
vertex $r$ to the given graph in the beginning and connect it to all the
vertices. Our algorithm starts with any arbitrary DFS tree $T$ rooted at 
$r$ in the augmented graph and it maintains a DFS tree rooted at $r$ at
each stage. It can be observed easily that each subtree rooted at any 
child of $r$ is a DFS tree of a connected component of the graph $G+U$.
The following notations will be used throughout the paper.

\begin{itemize}
\itemsep0em 
\item  $T(x):$ The subtree of $T$ rooted at vertex $x$.
\item  $path(x,y):$ Path from the vertex $x$ to the vertex $y$ in $T$.
\item  $dist_T(x,y):$ The number of edges on the path from $x$ to $y$ in $T$.
\item  $LCA(x,y):$ The lowest common ancestor of $x$ and $y$ in tree $T$.
\item  $N(w):$ 
The adjacency list of vertex $w$ in the graph $G+U$.
\item  $L(w):$ 
The reduced adjacency list of vertex $w$ in the graph $G+U$.
\item $T^*:$~ The DFS tree rooted at $r$ computed by our algorithm for
the graph $G+U$.
\item $par(w):$~ Parent of $w$ in $T^*$.
\end{itemize}

A subtree $T'$ is said to be {\em hanging} from a path $p$ if the root $r'$ of $T'$ is a child of some vertex on the path $p$
and $r'$ does not belong to the path $p$. 
Unless stated otherwise, every reference to a path refers to an
ancestor-descendant path defined as follows:

\begin{definition}[Ancestor-descendant path]
A path $p$ in a DFS tree $T$ is said to be ancestor-descendant path 
if its endpoints have ancestor-descendant relationship in $T$.
\label{definition:anc-des}
\end{definition}

We now state the operations supported by the data structure $\DD$ (complete details of $\DD$ are in Section \ref{sec:data-structure}). 
Let $U$ below refer to a set of updates that consists of vertex and edge deletions only.
For any three vertices $w,x,y\in T$, where $path(x,y)$ is an 
ancestor-descendant path in $T$, the following two queries can be answered using $\DD$ in $O(\log^3 n)$ time.
\begin{enumerate}
\item $Query(w,x,y):$ 
among all the edges from $w$ that are incident on $path(x,y)$ 
in $G+U$, return an edge that is incident nearest to $x$ on $path(x,y)$.

\item $Query(T(w),x,y):$ 
among all the edges from $T(w)$ that are incident on $path(x,y)$ 
in $G+U$, return an edge that is incident nearest to $x$ on $path(x,y)$.
\end{enumerate}

We now describe an important property of a DFS traversal that will be crucially 
used in our algorithm.  



\subsection{Properties of a DFS tree}
DFS traversal has the following flexibility : when the traversal reaches  a vertex, say $v$, the next vertex 
to be traversed can be {\em any} unvisited neighbor of $v$. 
In order to compute a DFS tree for $G+U$ efficiently, our algorithm exploits this flexibility,
the original DFS tree $T$, and the following property of DFS traversal. 

\begin{figure}[!hb]
\centering
\includegraphics[width=0.35\linewidth]{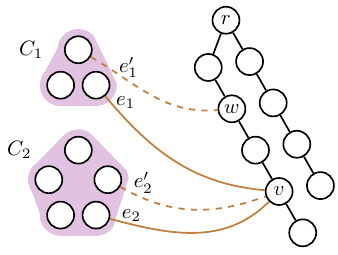}
\caption{Edges $e_1'$ as well as $e_2'$ can be ignored during the DFS traversal.}
\label{fig:component-property}
\end{figure}

\begin{lemma}[Components Property]
\label{lemma:L()}
Let $T^*$ be the partially grown DFS tree and $v$ be the vertex currently being visited. 
Let $C$ be any connected component in the subgraph induced by the unvisited vertices.  Suppose two edges $e$ and $e'$ from $C$
are incident respectively on $v$ and some ancestor (not necessarily proper) $w$ of $v$ in $T^*$. Then it is sufficient to 
consider only $e$ during the rest of the DFS traversal, i.e., the edge $e'$ need not  be scanned. (Refer to Figure \ref{fig:component-property}).
\end{lemma}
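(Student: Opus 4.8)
The plan is to reason about the structure of DFS traversal and the moment at which the component $C$ becomes a candidate for exploration. The key observation I want to establish is that once the traversal is at $v$ with $w$ an ancestor of $v$ in the partially grown tree $T^*$, the entire subtree that will eventually be grown from $C$ will be attached somewhere in the subtree rooted at $v$, provided we commit to entering $C$ through the edge $e$ incident on $v$. Because $C$ is a connected component of the \emph{unvisited} vertices, every vertex of $C$ is reachable from $v$ using only vertices of $C$, so there is no obstruction to growing all of $C$ as a subtree below $v$.

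First I would fix the state of the traversal at the instant $v$ is being visited and note that $w$, being an ancestor of $v$ in $T^*$, is already on the current active path of the DFS stack (all proper ancestors of $v$ are visited and lie on $path(r,v)$ in $T^*$). Next I would argue that if we always prefer to continue the traversal into unvisited neighbours of the current frontier, then immediately after finishing with $v$'s other business we may descend into $C$ via $e$; since $C$ is connected and disjoint from the already-visited part of the tree, the recursion will visit every vertex of $C$ and return, forming a subtree $T^*(u)$ rooted at the endpoint $u \in C$ of $e$, entirely contained in $T(v)$ (in particular below $v$). The crucial point is that at the time this subtree is completed and the traversal backtracks past $v$ up to and beyond $w$, every vertex of $C$ is already visited.

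Now I would examine the edge $e' = (w, u')$ with $u' \in C$. Both endpoints $w$ and $u'$ get visited: $w$ is an ancestor of $v$, and $u' \in C$ is visited while building the subtree below $v$. I want to show $e'$ is a back edge of the resulting tree, i.e. one endpoint is an ancestor of the other. Since $u'$ lies in the subtree rooted below $v$, and $v$ is a descendant of $w$, it follows that $u'$ is a descendant of $w$, so $(w,u')$ is indeed a back edge. Hence declining to scan $e'$ during the traversal does not cause any non-tree edge to become a cross edge; by the necessary-and-sufficient condition stated earlier, the tree remains a valid DFS tree. This is the heart of the argument: ignoring $e'$ is safe precisely because the component property forces $u'$ to be buried in $T^*(v)$ regardless.

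The main obstacle I anticipate is making the ``it is sufficient to consider only $e$'' claim precise: the lemma asserts we may \emph{ignore} $e'$, but we have not committed to entering $C$ through $e$ at any particular step. I would handle this by arguing that whatever happens, the first edge by which the traversal eventually enters $C$ determines an entry vertex that is a descendant of $v$ (since $v$ is the only visited vertex adjacent to $C$ that we are choosing to keep, and the traversal from $v$ downward is the only way $C$ can be reached), and so all of $C$ ends up inside $T^*(v)$; consequently $e'$ is a back edge no matter which admissible choices the traversal makes later. The care needed is to ensure that suppressing $e'$ from $w$'s reduced list does not inadvertently disconnect $C$ from the reachable frontier, which is exactly why the hypothesis that $e$ from $v$ is retained is essential.
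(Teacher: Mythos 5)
Your argument is correct and matches the paper's own (very terse) justification: the paper simply asserts that $e'$ will appear as a back edge in the resulting DFS tree, which is exactly what you establish by showing that the retained edge $e$ forces every vertex of $C$ — in particular the $C$-endpoint of $e'$ — to end up as a descendant of $v$, hence of $w$. Your final paragraph supplies the one detail the paper leaves implicit (that $v$ cannot be popped before the $C$-endpoint of $e$ is visited, so $C$ is necessarily entered at or below $v$ regardless of later choices), so no further changes are needed.
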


Skipping $e'$ during the DFS traversal, as stated in the components property, is justified because $e'$ will appear
as a back edge in the resulting DFS tree. A similar property describing the \textit{inessential} edges of a DFS trees 
was used by Smith~\cite{Smith86} for computing a DFS tree of a planar graph in the parallel setting.
In order to highlight the importance of the components property, and to motivate the requirement of data structure $\cal D$,
we first consider a simpler case which deals with reporting a DFS tree after a single update in the graph.

\section{Handling a single update}
\label{sec:Overview1}

Consider the failure of a single edge $(b,f)$ (refer to Figure \ref{figure:overview-failure-eg} (i)).
Exploiting the flexibility of DFS traversal, we can assume a stage in the DFS traversal of $G\backslash \{(b,f)\}$ where
the partial DFS tree $T^*$ is $T\backslash T(f)$ and  vertex $b$ is currently being visited. 
Thus, the unvisited graph is a single connected component containing the vertices of $T(f)$. 
Now, according to the components property we need to process only the lowest edge from $T(f)$ to $path(b,r)$ 
($(k,b)$ in Figure \ref{figure:overview-failure-eg} (ii)). 
Hence, the DFS traversal enters this component using the edge $(k,b)$ and performs a traversal of the subgraph induced by the vertices of $T(f)$.
The resulting DFS tree of this subgraph would now be rooted at $k$. 
Rebuilding the DFS tree after the failure of edge $(b,f)$ thus reduces to 
finding the lowest edge from $T(f)$ to $path(e,r)$, 
and then rerooting a subtree $T(f)$ of $T$ at the new root $k$. 
We now describe how this rerooting can be performed in $\tilde{O}(n)$ time
in the following section.

\begin{figure}[!ht]
\centering
\includegraphics[width=\linewidth]{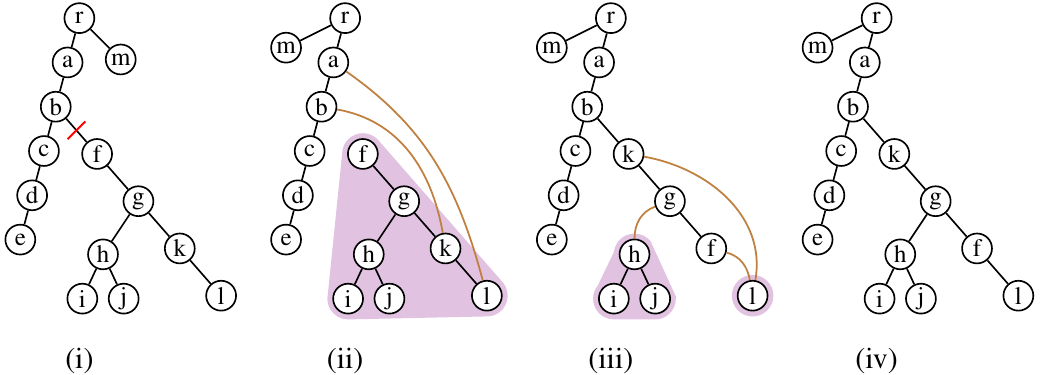}
\caption{ (i) Failure of edge $(b,f)$. 
(ii) Partial DFS tree $T^*$ with unvisited graph $T(f)$, component property allows us to neglect $(a,l)$.
(iii) Augmented $path(k,f)$ to $T^*$, the components property allows us to neglect $(l,k)$.
(iv) Final DFS tree of $G\backslash \{(b,f)\}$.
}
\label{figure:overview-failure-eg}
\end{figure}

\subsection{Rerooting a DFS tree}
Given a DFS tree $T$ originally rooted at $r_0$ and a vertex $r'$, 
the aim is to compute a DFS tree of the graph that is rooted at $r'$. 
Note that any subtree $T(x)$ of the DFS tree $T$ is also a DFS tree of the subgraph
induced by the vertices of $T(x)$. Hence, the same procedure can be 
applied to reroot a subtree $T(x)$ of the DFS tree $T$. 
Thus, in general our aim is to reroot $T(r_0)$ at a new root $r'\in T(r_0)$ 
(see Figure \ref{figure:overview-failure-eg} (ii), where the subtree 
$T(f)$ would be rerooted at its new root $k$).

\begin{figure*}[!ht]
\centering
\begin{procedure}[H]
\BlankLine
\ForEach(\tcc*[f]{$a=par(b)$ in original tree $T(r_0)$.}){$(a,b)$ on $path(r_0,r')$}
	{
	$par(a)\leftarrow b$\;
	\ForEach{child $c$ of $b$ not on $path(r_0,r')$}
		{
		$(u,v)\leftarrow Query(T(c),r_0,b)$
		\tcc*{where $u\in path(r_0,r')$ and $v\in T(c)$.} 
		\If{$(u,v)$ is non-null}
			{
			$Reroot(T(c),v)$\;
			$par(v)\leftarrow u$\;
			}
		}	
}
\caption{Reroot(\text{$T(r_0),r'$}): Reroots the subtree $T(r_0)$ of $T$ to be rooted at the vertex $r'\in T(r_0)$.}
\label{alg:re-root}
\end{procedure}
\caption{The recursive algorithm to reroot a DFS tree $T(r_0)$ at the new root $r'$.}
\end{figure*}

Our algorithm (refer to Procedure \ref{alg:re-root}) essentially performs 
the DFS traversal (exploiting the flexibility of DFS) 
in such a way
that components of the unvisited graph can be easily identified.
The components property can then be applied to each such component, processing only $O(n)$ edges 
to compute the rerooted DFS tree.
The DFS traversal first visits the path from $r'$ to the root of tree $T(r_0)$. 
This reverses $path(r_0,r')$ in the new DFS tree $T^*$ as now $r'$ would be an ancestor of $r_0$ (see Figure \ref{figure:overview-failure-eg} (iii)).
Now, each subtree hanging from $path(r',r_0)$ in $T$ forms a component of the unvisited graph. 
This is because the presence of any edge between these subtrees would imply a cross edge in the original DFS tree. 
Using the components property we know that for each of these subtrees, say $T_i$,  
we only need to process the lowest edge from $T_i$ on the new path from $r'$ to $r_0$ in $T^*$.
Since $path(r',r_0)$ is reversed in $T^*$, it is equivalent to processing the highest edge 
$e_i$ from $T_i$ to the $path(r_0,r')$ in $T$. Recall that this query can be answered by our data structure $\cal D$
in $O(\log^3 n)$ time (refer to Section \ref{sec:prelim}). Now, let $v_i$ be the end vertex of $e_i$ in $T_i$.
The DFS traversal will thus visit the component induced by the vertices of $T_i$ through $e_i$,
and produces its DFS tree that is rooted at $v_i$. This rerooting can be performed by invoking the rerooting procedure
recursively on the subtree $T_i$ with the new root $v_i$.


We now analyze the total time required by Procedure \ref{alg:re-root} to reroot a subtree $T'$ of the DFS tree $T$. 
The total time taken by our algorithm is proportional to the number of edges processed by the algorithm.
These edges include the {\em tree edges} that were a part of the original tree $T'$ and 
the {\em added edges} that are returned by the data structure $\cal D$.
Clearly, the number of tree edges in $T'$ are $O(|T'|)$. Also, since the added edges eventually become 
a part of the new DFS tree $T^*$, they too are bounded by the size of the tree $T'$. 
Further, the data structure $\cal D$ takes $O(\log^3 n)$ time to report each added edge. 
Hence the total time taken by our algorithm to rebuild $T'$ is $O(|T'|\log^3 n)$ time. 
Since $\cal D$ can be built in $O(m\log n)$ time (refer to Theorem \ref{thm:DS} in the Section \ref{sec:data-structure}),
we have the following theorem.  

%

\begin{theorem}
An undirected graph can be preprocessed to build a data structure in $O(m \log n)$ time, such that 
any subtree $T'$ of the DFS tree can be rerooted at any vertex in $T'$, in $O(|T'| \log^3 n)$ time.
\label{thm:reroot}
\end{theorem}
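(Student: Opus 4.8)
The plan is to prove the two bounds separately. The preprocessing bound is immediate: I would simply invoke Theorem~\ref{thm:DS}, which asserts that the data structure $\DD$ supporting both query types can be built in $O(m \log n)$ time. All the work therefore goes into establishing correctness and the running time of the rerooting procedure (Procedure~\ref{alg:re-root}).

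For correctness I would proceed by induction on $|T'|$. First I would justify that the DFS traversal may legitimately be guided to walk up $path(r_0,r')$ first, reversing it so that $r'$ becomes an ancestor of $r_0$ in $T^*$; this is permitted by the flexibility of DFS, since consecutive vertices on this path are adjacent in $T$ and hence each is an admissible ``next unvisited neighbor''. The crucial structural claim is that, once this spine has been traversed, each subtree $T_i$ hanging from $path(r_0,r')$ forms a distinct connected component of the subgraph induced by the still-unvisited vertices: any edge joining two such hanging subtrees would be a cross edge of the original tree $T$, contradicting the fact that every non-tree edge of a DFS tree is a back edge. Having isolated these components, I would apply the components property (Lemma~\ref{lemma:L()}): it suffices to enter $T_i$ through a single edge, namely the highest edge $e_i$ from $T_i$ to $path(r_0,r')$ in $T$ (equivalently the lowest on the reversed spine in $T^*$), which is exactly what $Query(T(c),r_0,b)$ returns, while every skipped edge from $T_i$ to the spine becomes a back edge in $T^*$.

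To close the induction I would observe that each $T_i$ is itself a DFS tree of the subgraph induced by its own vertices, so rerooting it at the entry vertex $v_i$ is an instance of the same problem on a strictly smaller subtree; the recursive call $Reroot(T(c),v_i)$ is thus well-founded and, by the inductive hypothesis, produces a valid DFS tree of that subgraph attached at $v_i$. Splicing the reversed spine together with these recursively rerooted subtrees yields a spanning tree of $T'$ in which every non-tree edge is a back edge, i.e.\ a valid DFS tree rooted at $r'$. For the running time I would count processed edges: the tree edges of $T'$ number $O(|T'|)$ and are each touched a constant number of times, and every \emph{added} edge returned by $\DD$ subsequently becomes a tree edge of $T^*$, so their count is also $O(|T'|)$. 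Since each query on $\DD$ costs $O(\log^3 n)$ (Section~\ref{sec:prelim}), the total is $O(|T'| \log^3 n)$.

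The step I expect to be the main obstacle is verifying that the components property may be applied \emph{inside} the recursion against the \emph{global} partial tree $T^*$ --- including the reversed spine and all previously attached subtrees --- rather than merely within an isolated subproblem. The cleanest resolution is the inductive framing above: because each hanging subtree is closed under the property that its induced subgraph is its own DFS tree, the recursion remains self-contained, and the back-edge guarantee of Lemma~\ref{lemma:L()} transfers from each local component to the final tree without interference between components.
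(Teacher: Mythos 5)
Your proposal is correct and follows essentially the same route as the paper: reverse the spine $path(r_0,r')$ using the flexibility of DFS, observe that the hanging subtrees are components of the unvisited graph (no cross edges in $T$), apply the components property to keep only the highest edge from each subtree to the spine via $\DD$, recurse, and charge each $O(\log^3 n)$ query to an edge that becomes a tree edge of $T^*$. Your explicit inductive framing and the remark that each hanging subtree is itself a DFS tree of its induced subgraph merely formalize what the paper states in passing, so no substantive difference.
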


We now formally describe how rebuilding a DFS tree after an update can be reduced 
to this simple rerooting procedure (see Figure \ref{figure:overview-updates}). 
\begin{enumerate}
\item \textbf{Deletion of an edge $(u,v)$:}\\
In case $(u,v)$ is a back edge in $T$, simply delete it from the graph.
Otherwise, let $u = par(v)$ in $T$. 
The algorithm finds the lowest edge $(u',v')$ on the $path(u,r)$ from $T(v)$, where $v'\in T(v)$. 
The subtree $T(v)$ is then rerooted at its new root $v'$ and hanged from $u'$ using $(u',v')$ in the final tree $T^*$.
\item \textbf{Insertion of an edge $(u,v)$:}\\
In case $(u,v)$ is a back edge, simply insert it in the graph.
Otherwise, let $w$ be the LCA of $u$ and $v$ in $T$ and $v'$ be the child of $w$ such that $v\in T(v')$.
The subtree $T(v')$ is then rerooted at its new root $v$ and hanged from $u$ using $(u,v)$ in the final tree $T^*$.
\item \textbf{Deletion of a vertex $u$:}\\
Let $v_1,...,v_c$ be the children of $u$ in $T$. 
For each subtree $T(v_i)$, the algorithm finds the lowest edge $(u'_i,v'_i)$ on the $path(par(u),r)$ from $T(v_i)$, where $v'_i\in T(v_i)$. 
Each subtree $T(v_i)$ is then rerooted at its new root $v'_i$ and hanged from $u'_i$ using $(u'_i,v'_i)$ in the final tree $T^*$.
\item \textbf{Insertion of a vertex $u$:}\\
Let $v_1,...,v_c$ be the neighbors of $u$ in the graph. 
Make $u$ a child of some $v_j$ in $T^*$.
For each $v_i$, such that $v_i\notin path(v_j,r)$, let $T(v'_i)$ be the subtree hanging from $path(v_j,r)$
such that $v_i\in T(v'_i)$. 
Each subtree $T(v'_i)$ is then rerooted at its new root $v_i$ and 
hanged from $u$ using $(u,v_i)$ in the final tree $T^*$.
\end{enumerate}


\begin{figure}[!ht]
\centering
\includegraphics[width=\linewidth]{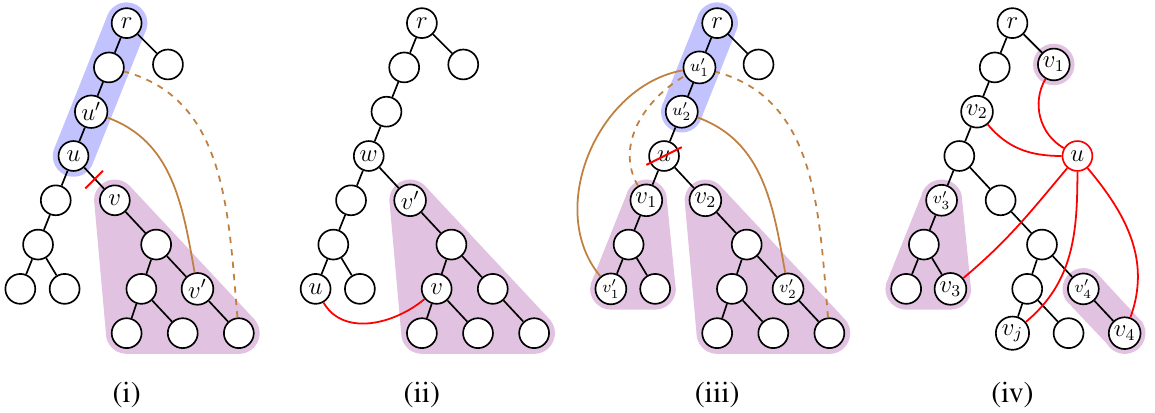}
\caption{Updating the DFS tree after a single update: (i) deletion of an edge, (ii) insertion of an edge, 
(iii) deletion of a vertex, and (iv) insertion of a vertex. 
The reduction algorithm reroots the marked subtrees (shown in violet) and hangs it from the 
inserted edge (in case of insertion) or the lowest edge (in case of deletion) on the marked path 
(shown in blue) from the marked subtree.
}
\label{figure:overview-updates}
\end{figure}

In case of vertex updates, multiple subtrees may be rerooted by the algorithm. Let these subtrees be $T_1,...,T_c$.
Thus, the total time taken by our algorithm is equal to the time taken to reroot the subtrees $T_1,...,T_c$.  
Using Theorem \ref{thm:reroot}, we know that a subtree $T'$ can be rerooted in $\tilde{O}(|T'|)$ time.
Since these subtrees are disjoint, the total time taken by our algorithm to build the resulting DFS tree is 
$\tilde{O}(|T_1|+...+|T_c|) = \tilde{O}(n)$. 
Thus, we have the following theorem.

\begin{theorem}
An undirected graph can be preprocessed to build a data structure in $O(m \log n)$ time such that 
after a single update in the graph, the DFS tree can be reported in $O(n \log^3 n)$ time.
\label{thm:DFS_reroot}
\end{theorem}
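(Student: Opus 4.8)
The plan is to prove the theorem by treating each of the four update types (deletion of an edge, insertion of an edge, deletion of a vertex, insertion of a vertex) separately, reducing each to a bounded collection of rerooting operations and then invoking Theorem~\ref{thm:reroot}. The preprocessing is immediate: I would build the data structure $\DD$ in $O(m\log n)$ time, which Theorem~\ref{thm:reroot} already provides, so that all the remaining work is confined to the update phase and is governed by the cost of rerooting.

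First I would isolate the two structural facts that underlie every case, both of which follow from the defining property of a DFS tree that every non-tree edge of $T$ is a back edge: (i) any edge leaving a subtree $T(x)$ must land on a proper ancestor of $x$, i.e. on $path(par(x),r)$; and (ii) two sibling subtrees hanging from a common vertex have no edge between them, since such an edge would be a forbidden cross edge. Observation (i) tells me exactly where a detached subtree can reattach, and observation (ii) guarantees that the pieces I detach in a single update are mutually non-adjacent, hence can be rebuilt independently.

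Then, for each case I would argue correctness using the Components Property (Lemma~\ref{lemma:L()}) together with the flexibility of DFS. Taking deletion of a tree edge $(u,v)$ with $u=par(v)$ as the representative case: I fix the partial DFS tree $T^\ast=T\setminus T(v)$ and regard $T(v)$ as the single unvisited component. By observation (i) every edge from $T(v)$ into $T^\ast$ lands on $path(u,r)$, and by the Components Property it suffices to enter through the lowest such edge $(u',v')$, which is reported by $Query$. Rerooting $T(v)$ at $v'$ via Procedure~\ref{alg:re-root} and hanging it from $u'$ then turns every remaining edge of $T(v)$ into a back edge, since $u'$ is the deepest attachment and all other endpoints are ancestors of $u'$; the result is therefore a valid DFS tree. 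The vertex-deletion and insertion cases follow the same template, the only addition being that, by observation (ii), the rerooted pieces are distinct hanging subtrees and are pairwise disjoint and non-adjacent.

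Finally I would bound the running time. Each rerooting of a subtree $T'$ costs $O(|T'|\log^3 n)$ by Theorem~\ref{thm:reroot}; in the edge cases a single subtree is rerooted, and in the vertex cases the rerooted subtrees $T_1,\dots,T_c$ are disjoint so that $|T_1|+\cdots+|T_c|\le n$, and summing yields the claimed $O(n\log^3 n)$ bound. The step I expect to be the main obstacle is the correctness verification for vertex insertion: there one must make the inserted vertex $u$ a child of a chosen neighbor $v_j$ and then ensure that every other neighbor $v_i$ either lies on $path(v_j,r)$ (so that $(u,v_i)$ is automatically a back edge) or is absorbed by rerooting its hanging subtree onto $u$. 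The delicate point is that several neighbors may fall in the \emph{same} hanging subtree; the resolution is that each distinct subtree is rerooted only once (at one contained neighbor), after which the remaining neighbors become descendants of $u$ and their edges become back edges. Reconciling this with the Components Property and certifying that no cross edge survives is the subtle part, whereas the time analysis and the other three cases are routine once observations (i) and (ii) are in hand.
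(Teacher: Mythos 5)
Your proposal follows essentially the same route as the paper: reduce each of the four update types to rerooting one or more disjoint subtrees (justified by the Components Property and the absence of cross edges), then invoke Theorem~\ref{thm:reroot} and sum the disjoint subtree sizes to obtain the $O(n \log^3 n)$ bound. Your extra care in the vertex-insertion case---rerooting each distinct hanging subtree only once even when it contains several neighbors of the inserted vertex---is a detail the paper leaves implicit, but it does not change the argument.
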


\section{Data Structure}
\label{sec:data-structure}
The efficiency of our algorithm heavily relies  on the
data structure ${\cal D}$. For any three vertices $w,x,y\in T$, where $path(x,y)$ is an 
ancestor-descendant path in $T$, we need to answer the following two kinds of queries. 
\begin{enumerate}
\item $Query(w,x,y):$ 
among all the edges from $w$ that are incident on $path(x,y)$ 
in $G+U$, return an edge that is incident nearest to $x$ on $path(x,y)$.

\item $Query(T(w),x,y):$ 
among all the edges from $T(w)$ that are incident on $path(x,y)$ 
in $G+U$, return an edge that is incident nearest to $x$ on $path(x,y)$.
\end{enumerate}

We now describe construction of the data structure $\cal D$. It employs
a combination of two well known techniques, 
namely, heavy-light decomposition \cite{SleatorT83} and suitable augmentation of a binary tree (segment tree)
as follows. 

\begin{enumerate}
\item Perform a preorder traversal of tree $T$ with the following restriction: Upon visiting a vertex $v\in T$, the child
of $v$ that is visited first is the one storing the largest subtree. 
Let ${\cal L}$ be the list of vertices ordered by this traversal.
\item Build a segment tree $\TB$ whose leaf nodes from left 
to right represent the vertices in list ${\cal L}$.
\item Augment each node $z$ of $\TB$ with a binary search tree ${\cal E}(z)$, storing all the edges $(u,v)\in E$ where
$u$ is a leaf node in the subtree rooted at $z$ in $\TB$. These edges are sorted according to the position of the second endpoint in ${\cal L}$.
%
\end{enumerate}

\begin{figure}[ht]
\centering
\includegraphics[width=.9\linewidth]{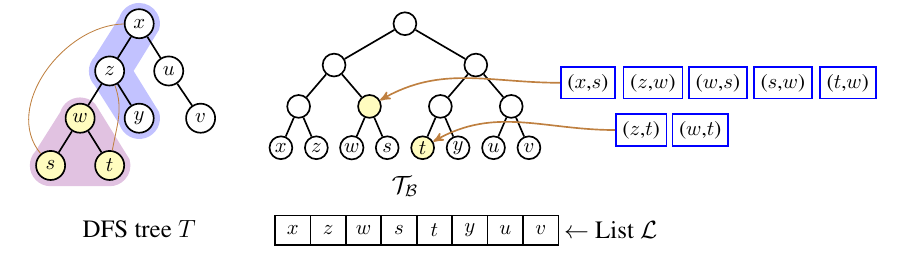}
\caption{(i) The highest edge from subtree $T(w)$ on $path(x,y)$ is edge $(x,s)$ 
and the lowest edges are edge $(z,w)$ and $(z,t)$.
(ii) The vertices of $T(w)$ are represented as union of two subtrees in segment tree $\TB$.}
\label{fig_xfast}
\end{figure}

The construction of $\DD$ described above ensures the following properties which are helpful in answering a query $Query(T(w),x,y)$ (see Figure \ref{fig_xfast}).
\begin{itemize}
\item $T(w)$ is present as an interval of vertices in $\cal L$ (by step 1). 
Moreover, this interval can be expressed as a union of $O(\log n)$ disjoint subtrees in $\TB$ (by step 2).
Let these subtrees be ${\TB}(z_1),\ldots,{\TB}(z_q)$.
\item It follows from the heavy-light decomposition used in step 1 that path $path(x,y)$ can be divided into $O(\log n)$ subpaths
 $path(x_1,y_1),\ldots,path(x_\ell,y_\ell)$ such that each subpath $path(x_i,y_i)$ is an interval in $\cal L$. 
\item 
Let query $Q(z,x,y)$ return the edge on $path(x,y)$ from the vertices in the subtree $\TB(z)$,  
that is closest to vertex $x$. 
Then it follows from step 3 that any query $Q(z_j,x_i,y_i)$ can be answered by a single predecessor or successor query on BST ${\cal E}(z_j)$ in $O(\log n)$ time.
\end{itemize}

To answer $Query(T(w),x,y)$, we thus find the edge closest to $x$ among all the edges reported by the queries
$\{ Q(z_j,x_i,y_i) | 1\le j \le q ~\mbox{and}~ 1\le i \le \ell \}$.
Thus, $Query(T(w),x,y)$ can be answered in $O(\log^3 n)$ time. 
Notice that $Query(w,x,y)$ can be considered as a special case 
where $q=1$ and $\TB(z_1)$ is the leaf node of $\TB$ representing $w$, i.e., $z_1=w$.
The space required by $\DD$ is $O(m\log n)$ as each edge is stored at $O(\log n)$ levels in $\TB$. 
Now, the segment tree $\TB$ can be built in linear time. Further, for every node $u\in \TB$, the sorted list of edges 
in ${\cal E}(u)$ can be computed in linear time by merging the sorted lists of its children. 
Thus, the binary search tree ${\cal E}(u)$ for each node $u\in \TB$ can be built in time
linear in the number of edges in ${\cal E}(u)$.
Hence the total time required to build this data structure is $O(m\log n)$. 
Thus, we have the following theorem.

\begin{theorem}
The queries $Query(T(w),x,y)$, $Query(w,x,y)$ on $T$ can be answered in $O(\log^3 n)$ 
worst case time using a data structure $\DD$ of size $O(m \log n)$, which  
can be built in $O(m\log n)$ time.
\label{thm:DS}
\end{theorem}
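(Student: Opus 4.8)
The plan is to verify, in turn, the three bounds claimed: the $O(\log^3 n)$ query time, the $O(m\log n)$ space, and the $O(m\log n)$ construction time. The heart of the argument is that a single ordering $\LL$ of the vertices simultaneously renders every \emph{subtree} $T(w)$ and every maximal \emph{heavy path} of $T$ as contiguous intervals, so that both the source set and the target path of a query decompose into only $O(\log n)$ canonical pieces.

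First I would establish the two decomposition lemmas. Because $\LL$ is a pre-order, every subtree $T(w)$ occupies a contiguous block of $\LL$ (independent of the child ordering), and as the leaves of $\TB$ are exactly $\LL$ in order, this block is the disjoint union of $O(\log n)$ canonical segment-tree nodes $T_1,\dots,T_q$; this is the standard range-to-canonical-nodes decomposition of a segment tree. For the target path I would prove the key contiguity claim: under the rule that at each vertex its largest-subtree (heavy) child is visited first, the pre-order emits a maximal heavy path $u,u_1,\dots,u_k$ as consecutive entries of $\LL$, since from each $u_i$ the traversal descends immediately into $u_{i+1}$ before any light subtree is touched. Combining this with the heavy-light guarantee that an ancestor-descendant $path(x,y)$ meets only $O(\log n)$ distinct heavy paths, $path(x,y)$ splits into $O(\log n)$ subpaths $path(x_1,y_1),\dots,path(x_\ell,y_\ell)$, each a contiguous interval of $\LL$.

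Next I would reduce a query to atomic sub-queries. Fix a canonical node $z_j$ and a subpath interval $[x_i,y_i]$. The BST $\EE(z_j)$ holds exactly the edges incident to the leaves of $T_j$, keyed by the $\LL$-position of their \emph{other} endpoint; since the subpath is a contiguous $\LL$-interval, a single successor/predecessor search in $\EE(z_j)$ returns, in $O(\log n)$ time, the edge from $T_j$ whose path-endpoint is nearest to $x$ within that interval (orienting the interval so that proximity to $x$ corresponds to the monotone direction of $\LL$ along the subpath). Running this over all $q\,\ell = O(\log^2 n)$ pairs and taking the edge nearest to $x$ answers $Query(T(w),x,y)$ in $O(\log^3 n)$ time; $Query(w,x,y)$ is the special case $q=1$ with $T_1$ the leaf of $\TB$ for $w$.

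Finally I would settle the space and build time by a direct charging argument. An edge $(u,v)$ is stored in $\EE(z)$ only for the $O(\log n)$ ancestors $z$ of the leaf $u$ in $\TB$ (and symmetrically for $v$), giving $O(m\log n)$ total space; the sorted lists are built bottom-up by merging the two children's lists at each node, so each edge takes part in $O(\log n)$ merges for a total of $O(m\log n)$ time, with the pre-order and the skeleton of $\TB$ costing only $O(n)$. I expect the main obstacle to be the contiguity claim for heavy paths under the heavy-first pre-order and its clean interaction with the segment tree: one must check that the \emph{same} linear order serves both the subtree side and the path side of every query, and that the orientation of ``nearest to $x$'' is handled consistently across the $O(\log n)$ subpaths; the rest is routine segment-tree and heavy-light bookkeeping.
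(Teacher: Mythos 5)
Your proposal follows essentially the same route as the paper: the heavy-first pre-order making both subtrees and heavy paths contiguous in $\LL$, the $O(\log n)\times O(\log n)$ decomposition into canonical segment-tree nodes and heavy-path pieces, one predecessor/successor search per pair in the position-sorted BSTs, and the per-level charging plus bottom-up merging for the $O(m\log n)$ space and build time. Your write-up is in fact somewhat more explicit than the paper's about the heavy-path contiguity claim and the orientation of ``nearest to $x$,'' but there is no substantive difference in approach.
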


\noindent \textbf{Note: } 
Procedure~\ref{alg:re-root} can also use a simpler version of $\cal D$ 
which requires a smaller query time. However, our \textit{generic} algorithm 
(described in Section~\ref{sec:algo}) would require these additional features of $\cal D$ 
as follows.
\begin{enumerate}
\item For Procedure~\ref{alg:re-root}, the binary search tree ${\cal E}(u)$ stored at each node $u$
of $\TB$ can be 
replaced by an array storing the sorted list of edges, making it simpler to implement. 
However, our \textit{generic} algorithm also requires deletion of edges from $\DD$. An edge 
can be deleted from $\cal D$ by deleting the edge from the binary search trees
stored at its endpoints 
and their ancestors in $\TB$. Since a deletion in a binary search tree takes $O(\log n)$ time, 
an edge can be deleted from $\DD$ in $O(\log^2 n)$ time.
\item Procedure~\ref{alg:re-root} only performs the second type of query, i.e., $Query(T(w),x,y)$.
Thus, it would essentially be querying only the part of $path(x,y)$ comprising of the ancestors of $w$ in 
$path(x,y)$. This is thus equivalent to $Query(T(w), LCA(x,w),$ $LCA(y,w))$ (see Figure~\ref{fig_xfast}), 
which will answer the required query as the only edges from $T(w)$ in this interval are incident on $path(x,y)$. 
In such a case, heavy-light decomposition and hence division of $path(x,y)$ to $O(\log n)$ subpaths would not be required. 
Hence, on each node in $u\in\TB$, the query is performed for a single path, requiring total $O(\log^2 n)$ time. 
However, our \textit{generic} algorithm also uses the first type of query, i.e.,  $Query(w,x,y)$, 
where $w$ can be an ancestor of $x$ and $y$. In such a case, we need to perform the query 
only on contiguous intervals of ${\cal L}$ as the interval between $x$ and $y$ in $\cal L$ 
would have several other edges from $w$ that are not incident on $path(x,y)$. 
This necessitates the use of heavy-light decomposition and hence each query requires $O(\log^3 n)$
time.
\end{enumerate}

\section{Handling multiple updates - Overview}
\label{sec:Overview2}
DFS tree can be computed in $\tilde{O}(n)$ time after a single update in the graph, by reducing it to Procedure \ref{alg:re-root}.
However, the same procedure cannot be directly applied to handle a sequence of updates because of the following reason.
The efficiency of Procedure \ref{alg:re-root} crucially depends on the data structure $\cal D$ which is built using the DFS tree $T$ of the original graph.
Thus, when the DFS tree is updated, we are required to rebuild $\cal D$ for the updated tree.
Now, rebuilding $\cal D$ is highly inefficient because it requires $O(m\log n)$ time.  
Thus, in order to handle a sequence of updates, our aim is to use the same $\cal D$ for handling multiple updates,
without having to rebuild it after every update. We now give an overview of the algorithm that reports the DFS tree 
after a set $U$ of updates. 

%
%

%
%

In case of a single update, all the edges reported by $\cal D$ are added to the final DFS tree $T^*$.
However, while handling multiple updates, we use $\cal D$ to build {\em reduced adjacency lists} for vertices of the graph, such that the
DFS traversal of the graph using these {\em sparser} lists gives the DFS tree of the updated graph. 
Now, the data structure $\cal D$ finds the lowest/highest edge from a subtree of $T$ to an ancestor-descendant path of $T$.
Thus, in order to employ $\cal D$ to report DFS tree of $G+U$, we need to ensure that the queried subtrees and paths 
do not contain any failed edges or vertices from $U$. 
Hence, for any set $U$ of updates, we compute a partitioning of $T$ into 
a disjoint collection of ancestor-descendant paths and subtrees
such that none of these subtrees and paths contain any failed edge or vertex. 
An important property of this partitioning is that there are no edges from $G$ lying between
any two subtrees in this partitioning. We refer to this partitioning as a {\em disjoint tree partitioning}.
Note that this partitioning depends only upon  the vertex and edge failures present 
in the set $U$.

Recall that during the DFS traversal we need to find the lowest edge from each component $C$ of the unvisited graph.
It turns out that any component $C$ can be represented as a union of subtrees and ancestor-descendant paths of the original DFS tree $T$. 
The components property can now be employed to compute the reduced adjacency lists of the vertices of the graph as follows.
We just find the lowest edge from each of the subtrees and the ancestor-descendant paths to $T^*$ by querying the data structure $\DD$.
Let this edge be $(x,y)$ where $x\in T^*$ and $y\in C$. We can just add $y$ to the reduced adjacency list $L(x)$ of $x$.
Since the components property ensures the remaining edges to $T^*$ can be ignored, the DFS traversal would thus consider all possible 
candidates for the lowest edge from every component $C$ to $T^*$.
Let the initial disjoint tree partitioning consist of a set of ancestor-descendant paths ${\cal P}$ 
and a set of subtrees ${\cal T}$. The algorithm for computing a DFS tree of $G+U$ can be summarized as follows:

{\em Perform the static DFS traversal on the graph with the elements of ${\cal P}\cup{\cal T}$ as the {\em super} vertices.
Visiting a super vertex $v^*$ by the algorithm involves extracting an ancestor-descendant path $p_0$ 
from $v^*$ and attaching it to the partially grown DFS tree $T^*$. The remaining part of 
$v^*$ is added back to $\PP \cup \TT$ as {\em new} super vertices.
Thereafter, the reduced adjacency lists of the vertices on path $p_0$ are computed using the data structure
$\DD$. The algorithm then continues to find the next super vertex using the reduced adjacency lists and so on. }

\section{Disjoint Tree Partitioning}
\label{sec:partition}

We formally define disjoint tree partitioning as follows.


\begin{definition}
Given a DFS tree $T$ of an undirected graph $G$ and a set $U$ of failed vertices and edges,
let $A$ be a vertex set in $G+U$. The disjoint tree partitioning defined by $A$ 
is a partition of the subgraph of $T$ induced by $A$ into
\begin{enumerate}
\item A set of paths $\PP$ such that
(i) each path in $\PP$ is an ancestor-descendant path in $T$ and does not contain any deleted
edge or vertex, and (ii) $|\PP| \le |U|$.
\item A set of trees $\TT$ such that 
each tree $\tau \in \TT$ is a subtree of $T$ which does not contain any deleted
edge or vertex.
\end{enumerate}
Note that for any $\tau_1,\tau_2\in\TT$, there is no edge between $\tau_1$ and $\tau_2$ because 
$T$  is a DFS tree.
\label{definition:disjoint-tree-partition}
\end{definition}




The disjoint tree partitioning for set $A=V\setminus \{r\}$ can be computed as follows.
Let $V_f$ and $E_f$ respectively denote the set of failed vertices and edges associated with the updates $U$.
We initialize $\PP=\emptyset$ and $\TT=\{T(w)~|~w ~\mbox{is a child of $r$}\}$. We refine the
partitioning by processing each vertex $v\in V_f$ as follows 
(see Figure \ref{figure:disjoint-tree-partition} (i)). 
\begin{itemize}
\item 
If $v$ is present in some $T'\in \TT$, we add the path from $par(v)$
to the root of $T'$ to $\PP$. We remove $T'$ from $\TT$ and 
add all the subtrees hanging from 
this path to $\TT$. 
\item 
If $v$ is present in some path $p\in \PP$, we split $p$ at $v$ into 
two paths. We remove $p$ from $\PP$ and add these two paths to $\PP$.
\end{itemize}
Edge deletions are handled as follows.
We first remove edges from $E_f$ that don't appear in $T$. Processing of 
the remaining edges from $E_f$ is quite similar to the processing of $V_f$
as described above. For each edge $e\in E_f$, just visualize deleting an 
imaginary vertex lying at mid-point of the edge $e$ (see Figure \ref{figure:disjoint-tree-partition} (ii)).
It takes $O(n)$ time to process any $v\in V_f$ and any $e\in E_f$.  


\begin{figure}[th]
\centering
\includegraphics[width=\linewidth]{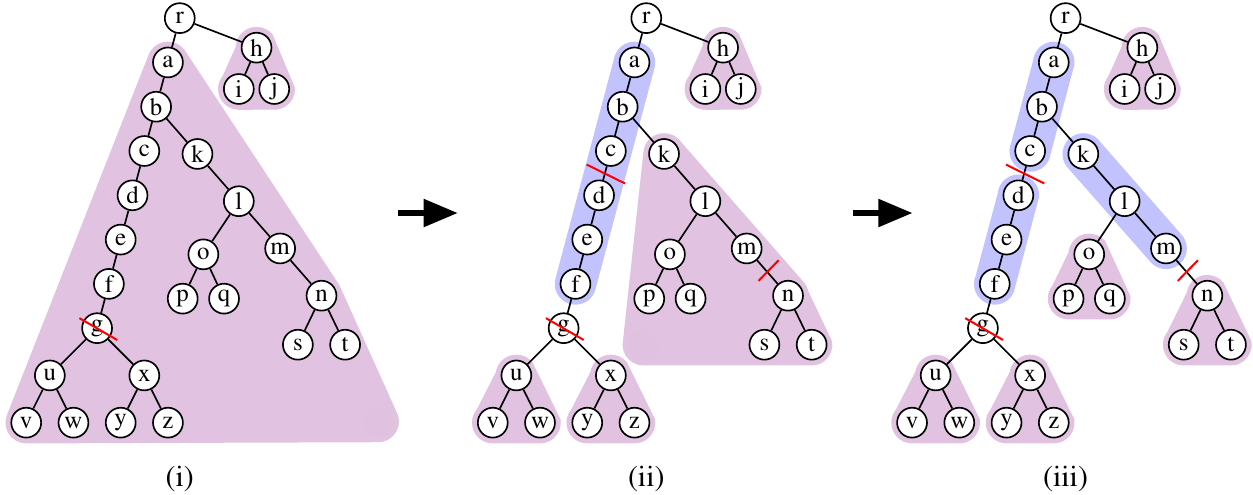}
\caption{
Disjoint tree partitioning for $V\setminus \{r\}$:
(i) Initializing $\TT=\{T(a),T(h)\}$ and $\PP=\emptyset$,
(ii) Disjoint tree partition obtained after deleting the vertex $g$.
(iii) Final disjoint tree partition obtained after deleting the edges $(c,d)$ and $(m,n)$.}
\label{figure:disjoint-tree-partition}
\end{figure}


Note that each update can add at most one path 
to $\PP$. So the size of $\PP$ is bounded by $|U|$. The fact that $T$ is a DFS tree of $G$ ensures that no two
subtrees in ${\cal T}$ will have an edge between them. So $\PP \cup {\cal T}$
satisfies all the conditions stated in Definition 
\ref{definition:disjoint-tree-partition}. 

\begin{lemma}
Given an undirected graph $G$ with a DFS tree $T$ and a set $U$ of failing vertices and edges, 
we can find a disjoint tree partition of set $V\setminus\{r\}$ 
in $O(n|U|)$ time. 
\label{lemma:partition}
\end{lemma}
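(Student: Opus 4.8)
The construction of $\PP$ and $\TT$ has already been given above, so the plan is to verify that its output meets every clause of Definition~\ref{definition:disjoint-tree-partition} and that it runs in $O(n|U|)$ time. I would establish correctness by carrying an invariant through the processing of the failures in $V_f$ and $E_f$: after each step, $\PP\cup\TT$ is a genuine partition (disjoint and covering) of the surviving vertices of $V\setminus\{r\}$ processed so far, every $p\in\PP$ is an ancestor-descendant path, no element of $\PP\cup\TT$ contains an already-processed failed vertex or failed tree edge, and the trees in $\TT$ are pairwise incomparable in the ancestor-descendant order of $T$. The base case is immediate: with $\PP=\emptyset$ and $\TT=\{T(w)\mid w\text{ a child of }r\}$, the subtrees hanging at $r$ partition $V\setminus\{r\}$ and are pairwise incomparable.

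For the inductive step I would check the two rules. When a failed vertex $v$ lies in a tree $T'\in\TT$, removing $v$ breaks $T'$ into exactly three kinds of pieces: the path from $par(v)$ up to the root of $T'$, the subtrees hanging off this path (other than the one rooted at $v$), and the subtrees rooted at the children of $v$. These are pairwise disjoint, they cover $T'\setminus\{v\}$, the new path is ancestor-descendant, and none of them contains $v$; this is precisely what the algorithm records. When $v$ lies on a path $p\in\PP$, splitting $p$ at $v$ yields two sub-segments of an ancestor-descendant path, which are again ancestor-descendant and avoid $v$. A failed tree edge is handled identically through its imaginary mid-edge vertex, which always falls strictly inside a single path or tree, so the same case analysis applies; failed non-tree edges are discarded first and change nothing. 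In each case every newly created subtree is nested inside one element that was already present, so incomparability of the trees in $\TT$ is preserved, and since $T$ is a DFS tree (all non-tree edges are back edges) this forbids any edge between two distinct subtrees, discharging the note in Definition~\ref{definition:disjoint-tree-partition}.

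It then remains to bound $|\PP|$ and the running time. Each failure raises $|\PP|$ by at most one: a vertex inside a tree adds a single path, a vertex on a path replaces one path by two (a net increase of one), and an edge failure behaves like its mid-edge vertex; summing over the at most $|U|$ processed failures gives $|\PP|\le|U|$. For the time bound, processing one failure touches a single ancestor-descendant path of length $O(n)$ together with the subtrees hanging from it, whose number is at most the number of tree vertices; hence locating the containing element, extracting the path, enumerating the hanging subtrees, and updating the vertex-to-element assignments each cost $O(n)$. With at most $|U|$ failures this yields the claimed $O(n|U|)$ total.

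The step I expect to require the most care is the vertex-in-tree rule in the inductive argument: pinning down precisely that deleting $v$ partitions $T'$ into the path from $par(v)$ to the root, the subtrees hanging from it, and the subtrees below $v$, with no vertex dropped or double-counted, and that the resulting subtrees stay pairwise incomparable so the no-edge-between-subtrees condition survives. Once this decomposition is stated cleanly, the ancestor-descendant property of the paths, the size bound $|\PP|\le|U|$, and the $O(n)$-per-update accounting all follow routinely.
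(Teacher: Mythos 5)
Your proposal is correct and follows essentially the same route as the paper: the paper's argument is exactly the construction you verify (initialize $\TT$ with the subtrees at $r$, refine by the two rules for a failed vertex in a tree or on a path, treat a failed tree edge via an imaginary mid-edge vertex), with $|\PP|\le|U|$ because each failure adds at most one path and $O(n)$ work per failure. Your added care about the decomposition of $T'$ upon deleting $v$ (separating the subtrees below $v$ from those hanging off the path) and the disjointness/incomparability invariant only makes explicit what the paper leaves implicit.
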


\section{Fault tolerant DFS Tree}
\label{sec:algo}

We first present a fault tolerant algorithm 
for a DFS tree.
Let $U$ be any given set of failed vertices or edges in $G$. 
In order to compute the DFS tree $T^*$ for $G+U$, our algorithm first constructs a disjoint tree 
partition ($\TT,\PP$) for $V\backslash \{r\}$ defined by the updates $U$ 
(see Lemma \ref{lemma:partition}). Thereafter, it can be 
visualized as the static DFS traversal on the graph whose ({\em super}) 
vertices are the elements of $\PP\cup \TT$. Note that our notion of super 
vertices is for the sake of understanding only.

\begin{figure*}[t]
\centering
\includegraphics[width=\linewidth]{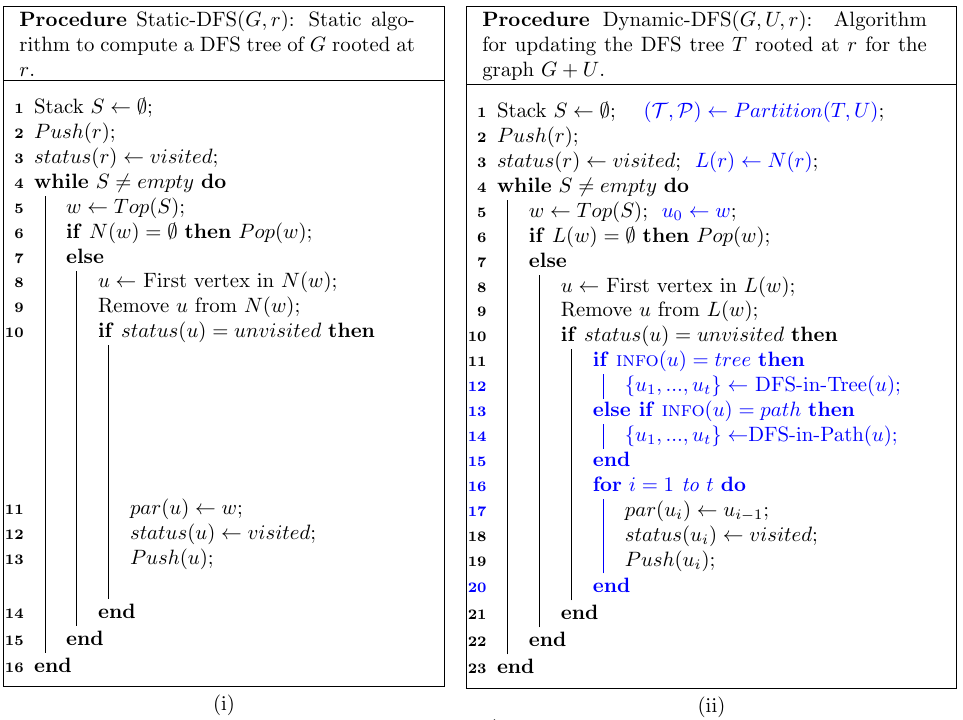}
\caption{The static (and dynamic) algorithm for computing (updating) 
a DFS tree. The key differences are shown in blue.}
\label{fig:DFSCompare}
\end{figure*}

Consider the stack-based implementation of the static algorithm for computing a DFS tree rooted at a vertex $r$ 
in graph $G$ (refer to Figure \ref{fig:DFSCompare}(i)). 
Our algorithm for computing DFS tree for $G+U$ (refer to 
Figure \ref{fig:DFSCompare}(ii)) is quite similar to the static algorithm.
The only points of difference are the following.
\begin{itemize} 
\item
In the static DFS algorithm whenever a vertex is visited, it is attached to the DFS tree
and pushed into the stack $S$.
In our algorithm when a vertex $u$ in some super vertex $v_s\in \PP\cup\TT$ is visited, 
a path starting from $u$ is extracted from $v_s$ and attached to the DFS tree, and  this entire path is pushed into the stack $S$. 
\item 
Instead of scanning the entire adjacency list $N(w)$ of a vertex $w$, 
the reduced adjacency list $L(w)$ is scanned. 
\end{itemize}

When a path is extracted from a super vertex $v_s$, the remaining unvisited part of $v_s$
is added back to $\TT\cup\PP$. 
However, we need to ensure that the properties of 
disjoint tree partitioning are satisfied in the updated $\TT\cup\PP$.
This is achieved using Procedure DFS-in-Path and Procedure DFS-in-Tree, which also build
the reduced adjacency list for the vertices on the path.
The construction of a sparse reduced adjacency list is 
inspired by the components property which can be \textit{adapted} 
in the context of our algorithm as follows.



\begin{lemma}[Adapted components property]
When a path $p$ is attached to the partially constructed DFS tree $T^*$ during the algorithm, 
for every edge $(x,y)$, where $x\in p$ and $y$ belongs to the unvisited graph
the following condition holds.
Either $y$ is added to $L(x)$ or $y'$ is added to $L(x')$ for some edge $(x',y')$ where $x'$ is a descendant (not necessarily proper) of $x$ in $p$ and $y'$ is connected to $y$ in the unvisited graph.
\end{lemma}

We now describe how the properties of disjoint tree partitioning and hence 
the adapted components property
is maintained by our algorithm when a vertex $v\in v_s$ is visited by the traversal.
\begin{enumerate}
\item Let $v_s=path(x,y)\in \PP$. 
Exploiting the flexibility of DFS, we traverse from $v$ to the farther end of $path(x,y)$.
Now, $path(x,y)$ is removed from $\PP$ and the untraversed part of $path(x,y)$ 
(with length at most half of $|path(x,y)|$) 
is added back to ${\cal P}$. 
We refer to this as  
{\em path halving}. This technique was also used by Aggarwal and Anderson \cite{AggarwalA88} in their parallel algorithm for computing DFS tree in undirected graphs. 
Notice that $|\cal P|$ remains unchanged or decreases by 1 after this step.
\item 
Let $v_s=\tau\in {\cal T}$. 
Exploiting the flexibility of a DFS traversal, we traverse the path from $v$ to the root of $\tau$, say $x$, and add
it to $T^*$. Thereafter, $\tau$ is removed from $\TT$ and all the subtrees hanging from this path 
are added to ${\cal T}$. 
Observe that every newly added subtree is also a subtree of the original DFS tree $T$. 
So the properties of disjoint tree partitioning are satisfied after this step as well.
\end{enumerate}

Let $path(v,x)$ be the path extracted from $v_s$. 
For each vertex $w$ in this newly added path, we compute $L(w)$ ensuring
the adapted components property
 as follows.
\begin{itemize}
\item[(i)] For 
each path $p\in {\cal P}$, 
among potentially many edges 
incident on $w$ from $p$, we just add any one edge. 
\item[(ii)] For each tree $\tau' \in {\cal T}$, we add at most one edge to $L$ as follows.
Among all edges incident on $\tau'$ from
$path(v,x)$, if $(w,z)$ is the edge such that $w$ is nearest to
$x$ on $path(v,x)$, 
then we add $z$ to $L(w)$. 
However, for the case $v_s\in \TT$, we have to consider only the newly added subtrees in $\TT$ for this step.
%
This is because the disjoint tree partitioning ensures the absence of edges between 
$v_s$ and any other tree in $\TT$.
\end{itemize}

Figure \ref{figure:overview-of-dfs-on-supernodes} provides an illustration of
how $\TT\cup\PP$ is updated when a super vertex in $\TT\cup\PP$ is visited.

\begin{figure}[t]
\centering
\includegraphics[width=.8\linewidth]{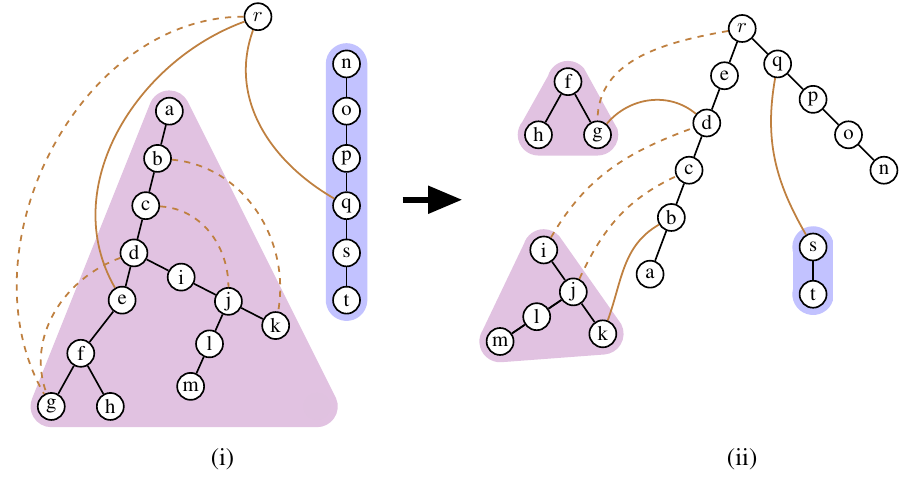}
\caption{
Visiting a super vertex from $\TT\cup\PP$.
(i) The algorithm visits $T(a)\in \TT$ using the edge $(r,e)$ and the $path(n,t)\in\PP$ using the edge $(r,q)$.
(ii) Traversal extracts $path(e,a)$ and $path(q,n)$ and augment it to $T^*$. 
The unvisited segments are added back to $\TT$ and $\PP$.
}
\label{figure:overview-of-dfs-on-supernodes}
\end{figure}


\subsection{Implementation of our Algorithm}
We now describe our algorithm in full detail. 
Firstly we delete all the failed edges in $U$ from the data structure $\DD$. 
Now, the algorithm begins
with a disjoint tree partition $(\TT,\PP)$ which evolves as the algorithm
proceeds. 
The state of any unvisited vertex in this partition is
captured by the following three variables.\\ 
-$\textsc{info}(u)$: this variable is set to $tree$ if $u$ belongs to a tree 
in $\TT$, and set to $path$ otherwise\\
-$\textsc{IsRoot}(v)$:~ this variable is set to $True$ if $v$ is the root of a 
tree in $\TT$, and $False$ otherwise. \\
-$\textsc{PathParam}(v)$:~ if $v$ belongs to some path, say $path(x,y)$, in 
$\PP$, then this variable stores the pair $(x,y)$, and $null$ otherwise.\\

\noindent\textbf{Procedure  Dynamic-DFS :}
For each vertex $v$, $status(v)$ is initially set as $unvisited$, and $L(v)$ 
is initialized to $\emptyset$. First a disjoint tree partition is computed for 
the DFS tree $T$ based on the updates $U$. The procedure Dynamic-DFS then
inserts the root vertex $r$ into the stack $S$. 
While the stack is non-empty, the procedure repeats the following 
steps. It reads the top vertex from the stack. Let this vertex be $w$. 
If $L(w)$ is empty then $w$ is popped out from the stack, else let $u$ be the 
first vertex in $L(w)$. If vertex $u$ is unvisited till now, 
then depending upon whether $u$ belongs to some tree in $\TT$ or some path in $\PP$, Procedure DFS-in-Tree or DFS-in-Path 
is executed.
A path $p_0$ is then returned to Procedure Dynamic-DFS where for each vertex of $p_0$
parent is assigned and status is marked visited. The whole of this path is then pushed into stack.
The procedure proceeds to the next iteration of While loop with the 
updated stack.\\

\begin{figure*}[t]
\centering
\includegraphics[width=\linewidth]{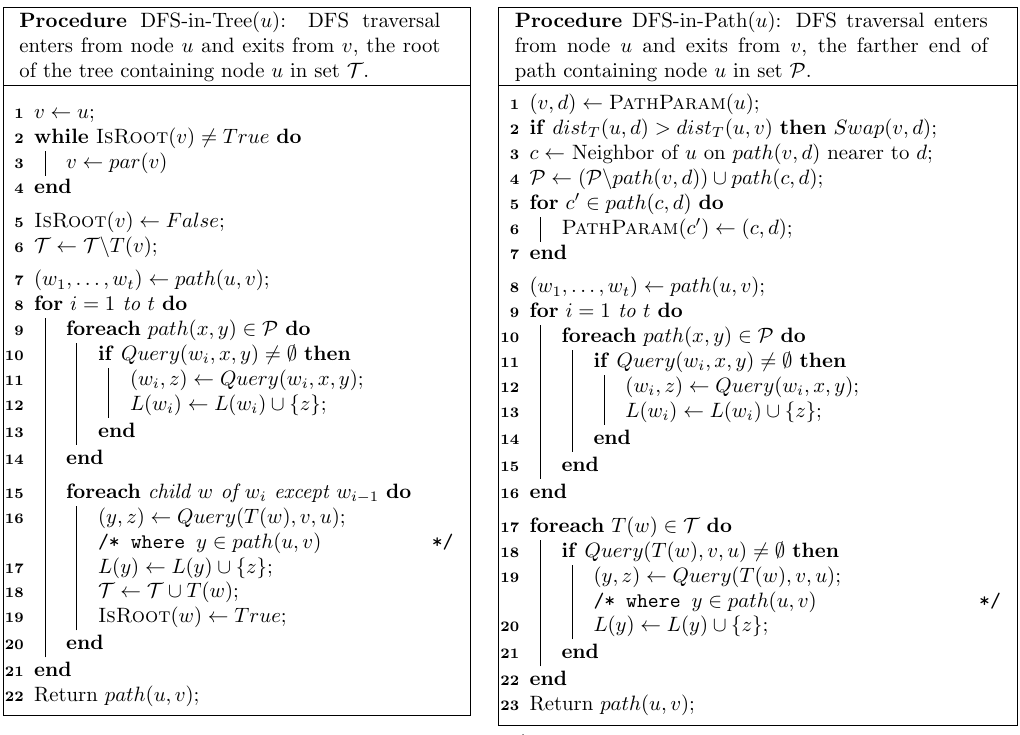}
\caption{The pseudocode of Procedures DFS-in-Tree and Procedures DFS-in-Path.}
\label{fig:treePathAlgo}
\end{figure*}


\noindent\textbf{ Procedure DFS-in-Tree :}
Let vertex $u$ be present in tree, say $T(v)$, in $\TT$ 
(the vertex $v$ can be found easily by scanning the ancestors of $u$ and 
checking their value of $\textsc{IsRoot}$).
The DFS traversal enters the tree from $u$ and leaves from the vertex $v$. 
Let $path(u,v)=\langle w_1=u,w_2\ldots,w_t=v \rangle$. The 
$path(u,v)$ is pushed into stack and attached to the partially constructed 
DFS tree $T^*$. We now update the partition $(\PP,\TT)$ and also update 
the reduced adjacency list for each $w_i$ present on $path(u,v)$ as follows. 
\begin{enumerate}
\item 
For each vertex $w_i$ and every path $path(x,y) \in \PP$, 
we perform $Query(w_i,x,y)$ on the data structure ${\cal D}$ that returns
an edge $(w_i,z)$ such that $z\in path(x,y)$. We add $z$ to $L(w_i)$. 
\item
Recall that since subtrees in $T$ do not have any cross edge between them, 
therefore, there cannot be any edge incident on $path(u,v)$ from trees which 
are already present in $\TT$. An edge can be incident only from the subtrees 
which were hanging from $path(u,v)$. $T(v)$ is removed from $\TT$ and 
all the subtrees of $T(v)$ hanging from $path(u,v)$ are inserted into $\TT$. 
For each such subtree, say $\tau$, inserted into $\TT$, we perform 
$Query(\tau,v,u)$ on the data structure ${\cal D}$ that returns an edge, 
say $(y,z)$, such that $z\in \tau$ and $y$ is nearest to $v$ on $path(u,v)$. 
We insert $z$ into $L(y)$.
\\
\end{enumerate}

\noindent\textbf{ Procedure DFS-in-Path  :}
Let vertex $u$ visited by the DFS traversal lies on a $path(v,y)\in \PP$. 
Assume $dist_T(u,v)>dist_T(u,y)$. The DFS traversal travels from $u$ to $v$
(the farther end of the path). The path $path(v,y)$ in set $\PP$ 
is replaced by its subpath that remains unvisited. The reduced adjacency list
of each $w\in path(u,v)$ is updated in a similar way as in the procedure 
DFS-in-Tree except that in step 2, we perform $Query(\tau,u,v)$ for
each $\tau \in \TT$.
%
Note that while performing step 1, 
the vertex $w_i$ can be an ancestor of the vertices of $path(x,y)$.
This is because the vertices of a path in $\PP$ can be ancestors of the vertices of 
another path in $\PP$. This was not true for Procedure~DFS-in-Trees because
vertices of a subtree in $\TT$ cannot be ancestors of vertices of any path in $\PP$.
Thus, our data structure $\cal D$ needs to support queries where $w_i$ is an 
ancestor of the queried path (refer to the note at the end of 
Section~\ref{sec:data-structure}).


The reader may refer to 
Figure \ref{fig:treePathAlgo} 
for pseudocode of Procedures DFS-in-Tree and DFS-in-Path. 
This completes the description of the
fault tolerant algorithm for DFS tree. This algorithm maintains 
the adapted components property
at each stage by construction given that the properties 
of disjoint tree partitioning are satisfied. 

\subsection{Correctness}
It can be seen that the following two invariants hold for the while loop in the 
Procedure Static-DFS described in Figure \ref{fig:DFSCompare} (i).
It is easy to see that 
these invariants imply the correctness of the algorithm, 
i.e., the generated tree is a rooted spanning tree where every non-tree edge is a back edge.

\begin{enumerate}
\item[$I_1$:] The sequence of vertices in the stack from bottom to top
constitutes an ancestor-descendant path from $r$ in the DFS tree computed.
\item[$I_2$:] For each vertex $v$ that is popped out,
all vertices in the set $N(v)$ have already been visited.
\end{enumerate}

These two invariants $I_1$ and $I_2$ also hold for Procedure 
Dynamic-DFS described in Figure \ref{fig:DFSCompare} (ii) as follows.
Invariant $I_1$ holds by construction as described in our algorithm.
Following lemma proves that invariant $I_2$ is maintained by our algorithm
since it follows the adapted components property
by construction.
%
\begin{lemma}
If the adapted components property 
is maintained by the Procedure Dynamic-DFS, then
invariant $I_2$ will hold true at each stage of the algorithm.
\label{lem:sufficiency}
\end{lemma}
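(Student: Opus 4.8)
The plan is to argue by a \emph{minimal counterexample}, ordering vertices by the time at which they are popped from the stack $S$. I will rely on two elementary facts about Procedure Dynamic-DFS. Fact (a): a vertex $w$ is popped only once $L(w)$ is exhausted, i.e. every vertex ever placed in $L(w)$ has by then been visited; this is exactly the pop condition of the procedure, together with the observation that $L(w)$ is frozen at the moment $w$ is pushed (the queries in Procedures DFS-in-Tree and DFS-in-Path only append to the lists of the vertices on the path currently being attached, never to the list of a vertex already sitting in $S$). Fact (b): the stack discipline --- each extracted path is pushed as a unit with its descendant end on top (forced by invariant $I_1$), so any vertex pushed while $v$ is already on $S$, and in particular any descendant of $v$ on $v$'s own path, is popped strictly before $v$.

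I would then suppose $I_2$ fails and let $v$ be the vertex whose pop is the \emph{earliest} violation, so that when $v$ is popped some neighbour $y\in N(v)$ is still unvisited. Because ``visited'' is monotone in time, $y$ is unvisited at every earlier instant, in particular at the time $t_0$ when the path $p_0$ containing $v$ was attached to $T^*$. I apply Property \ref{property:P1} to the edge $(v,y)$, with $v\in p_0$ and $y$ in the unvisited graph. In the first alternative $y\in L(v)$, and Fact (a) then forces $y$ to be visited before $v$ is popped, a contradiction. Hence the second alternative holds: there is an edge $(v',y')$ with $v'$ a (not necessarily proper) descendant of $v$ on $p_0$, with $y'\in L(v')$, and with $y'$ connected to $y$ in the graph induced by the vertices unvisited at $t_0$. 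By Facts (a) and (b), $y'$ is visited before $v$ is popped (if $v'=v$ this is immediate from Fact (a); otherwise $v'$ is popped before $v$, and $y'\in L(v')$ is visited before that).

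Converting ``$y'$ is connected to $y$'' into an actual contradiction is the step I expect to be the main obstacle, since the unvisited graph keeps shrinking and connectivity at time $t_0$ need not survive later. I would handle it by fixing a path $y'=z_0,z_1,\dots,z_\ell=y$ inside the \emph{static} subgraph induced by the vertices unvisited at $t_0$. Here $z_0=y'$ is visited before $v$ is popped while $z_\ell=y$ is not, so there is an index $j$ with $z_j$ visited before $v$'s pop but $z_{j+1}$ not. Every $z_i$ is unvisited at $t_0$, so $z_j$ is first visited at some time $s_j\ge t_0$ that precedes $v$'s pop; thus $v$ is on $S$ when $z_j$ is pushed, and Fact (b) makes $z_j$ popped strictly before $v$. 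But $z_{j+1}\in N(z_j)$ is not visited before $v$ is popped, hence not before $z_j$ is popped either --- so $z_j$ is an $I_2$-violation whose pop precedes that of $v$, contradicting the minimality of $v$. This rules out the second alternative as well, so no counterexample exists and invariant $I_2$ holds at every stage, as claimed.
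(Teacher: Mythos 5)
Your proof is correct and follows essentially the same route as the paper's: assume a first-popped violator, invoke Property \ref{property:P1}, and walk along the connecting path $p^*$ in the unvisited graph to locate a boundary vertex that gives an earlier violation. Your version is in fact slightly more careful at the final step, where the paper merely asserts that ``for some vertex in $A$ invariant $I_2$ is not satisfied'' without explicitly checking (via the LIFO stack discipline, your Fact (b)) that this vertex is popped \emph{before} $x$, which is what the minimality contradiction actually requires.
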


\begin{proof} 
We give a proof by contradiction as follows.
Assume that $x$ is the first vertex that is popped out of the stack before some vertex $y\in N(x)$ is visited. 
Consider the time when a path $p$ containing $x$ was pushed in the stack. 
Clearly $y\notin L(x)$, hence using the adapted components property 
we know that some $y'\in L(x')$ is connected to $y$ in the unvisited
graph where $x'$ is a descendant (not necessarily proper) of $x$ in $p$. 
Let $p^*$ be a path between $y'$ and $y$ in the unvisited graph.
  
Now, consider the time when $x$ is popped out of the stack.
Clearly all its descendants including $x'$ have been popped out, so 
using invariant $I_2$ for $x'$, $y'$ has been visited by the traversal.
Thus, $p^*$ can be divided into two non-empty sets $A$ and $B$ denoting visited and
unvisited vertices of $p^*$ respectively. 
Here $y'\in A$ and $y\in B$, thus clearly for the last vertex of $p^*$ that is present in $A$, the invariant $I_2$ is not satisfied.
This contradicts our assumption that $x$ is the first vertex that is popped out of the stack for which $I_2$ is not satisfied. Thus, maintenance of the adapted components property 
 ensures the invariant $I_2$ in our algorithm.
\end{proof}

%
%
%
%

Hence, our algorithm indeed computes a valid DFS tree for $G+U$.

\subsection{Time complexity analysis}
As described earlier the disjoint tree partitioning and the components property play a key role in the 
efficiency of our algorithm. They allow us to limit the size of the reduced adjacency lists $L$,
that are built during the algorithm. 
Our algorithm computes $T^*$ by performing a DFS traversal on the reduced adjacency list $L$.
Thus, the time complexity of our algorithm is $O(n+|L|)$ excluding the time required to compute $L$. 

We first establish a bound on the size of $L$. 
In each step our algorithm extracts a path from $v_s\in \PP\cup\TT$ and attaches it to $T^*$.
Let $P_t$ and $P_p$ denote the set of such paths 
that originally belonged to 
some tree in $\TT$ and some path in $\PP$, respectively.
%
For every path $p_0\in P_t\cup P_p$ our algorithm performs the following queries on $\DD$.
\begin{enumerate}
\item[(i)] For each vertex $w$ in $p_0$, we query each path in $\PP$ for an edge incident on the vertex $w$. 
Thus, the total number of edges added to $L$ by these queries is $O(n|\PP|)$.

\item[(ii)] If $p_0$ belongs to $P_p$, then we query for an edge from each $\tau \in \TT$
to $p_0$. It follows from the path halving technique 
that each path in $\PP$ reduces to at most half of its length 
whenever some path is extracted from it and attached to $T^*$. 
 Hence, the size of $P_p$ is bounded by $|\PP| \log n$.
%

\item[(iii)] If $p_0$ belongs to $P_t$, then we query for an edge from only those
subtrees which were hanging from $p_0$. Note that these subtrees will now be added to
set $\TT$.
Hence, the total number of trees queried for 
this case will be bounded by number of trees inserted to $\TT$.
Since each subtree can be added to $\TT$ only once, these edges are bounded by $O(n)$
throughout the algorithm. 
\end{enumerate}

Thus, the size of $L$ is bounded by $O\big(n(1+|\PP|)\log n\big)$. Since each edge added to $L$ requires querying the data structure $\DD$ 
which takes $O(\log^3 n)$ time, the total time taken to compute $L$ is $O\big(n(1+|\PP|\log n)\log^3 n\big)$.
Thus, we have the following lemma.

\begin{lemma}
An undirected graph can be preprocessed to build a data structure of 
$O(m \log n)$ size such that for any set $U$ of $k$ failed vertices or 
edges (where $k \leq n$), the DFS tree of $G + U$ can be reported in 
$O(n(1+|\PP|\log n) \log^3 n)$ time.
\label{lemma:deletions-only}
\end{lemma}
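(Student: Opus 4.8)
The plan is to reduce the whole claim to a single counting statement: the number of queries Procedure Dynamic-DFS issues to $\DD$ while building the reduced adjacency lists $L$. Once the lists are built, producing $T^*$ is just a static DFS over the sparsified lists, costing $O(n+|L|)$; and each entry placed in $L$ is obtained by one predecessor/successor query answered by $\DD$ in $O(\log^3 n)$ time (Theorem \ref{thm:DS}). Correctness is already in hand: invariant $I_1$ holds by construction, since each extracted path is pushed onto the stack as one contiguous ancestor--descendant segment, and Lemma \ref{lem:sufficiency} shows that maintaining Property \ref{property:P1} forces invariant $I_2$; moreover the disjoint tree partition (Definition \ref{definition:disjoint-tree-partition}) guarantees that every queried path and subtree is free of failures and that no edge runs between two trees of $\TT$, so every query is legitimate and Property \ref{property:P1} is preserved at each step. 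Hence it only remains to count queries.

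I would split the paths extracted during the run into $P_p$, those peeled off some path of $\PP$, and $P_t$, those peeled off some tree of $\TT$, and charge the three query types of the analysis separately. The first observation, which I would establish up front, is that $|\PP|$ never increases: path halving leaves $|\PP|$ unchanged or decreases it by one, while extracting a path from a tree only spawns new subtrees into $\TT$; thus $|\PP|\le|U|$ at every moment. Given this, the per-vertex path queries (type (i)) cost $O(n|\PP|)$ queries, because the extracted paths partition $V\setminus\{r\}$, giving $O(n)$ vertices, each probing at most $|\PP|$ paths.

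The crux is bounding the tree-to-path queries. For those fired while extracting a path of $P_p$ (type (ii)), the key is the path-halving bound $|P_p|\le|\PP|\log n$: each of the $|\PP|$ original paths shrinks to at most half its length every time a prefix is peeled off, so it contributes at most $\log n$ segments; since each such segment probes at most $|\TT|\le n$ trees, these contribute $O(n|\PP|\log n)$ queries. For those fired while extracting a path of $P_t$ (type (iii)), I would use that each subtree of $T$ is inserted into $\TT$ at most once over the entire execution and is queried only at the moment it is created, so these total only $O(n)$ queries. Summing the three types gives a query count of $O\big(n(1+|\PP|\log n)\big)$, and correspondingly $|L|=O\big(n(1+|\PP|\log n)\big)$.

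Finally I would assemble the bound. Multiplying the query count by the $O(\log^3 n)$ per-query cost of Theorem \ref{thm:DS} gives $O\big(n(1+|\PP|\log n)\log^3 n\big)$ for building $L$, and the concluding DFS adds only $O(n+|L|)$, which is absorbed. The one-time preprocessing of $\DD$ costs $O(m\log n)$ time and space (Theorem \ref{thm:DS}); during reporting, deleting the failed edges from $\DD$ costs $O(\log^2 n)$ apiece and computing the disjoint tree partition costs the time of Lemma \ref{lemma:partition}, both of which are dominated by the cost of building $L$, yielding the claimed reporting time. I expect the main obstacle to be step (ii): one must argue precisely why path halving caps the number of segments per original path at $\log n$, and combine this with the once-per-subtree accounting of step (iii), so that probing every tree of $\TT$ for every extracted path segment inflates the total only by the stated polylogarithmic factor rather than by a full factor of $|\TT|$.
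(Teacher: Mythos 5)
Your proposal is correct and follows essentially the same route as the paper: the same three-way classification of queries (per-vertex queries to paths of $\PP$, tree queries charged to segments of $P_p$ via the path-halving bound $|P_p|\le|\PP|\log n$, and tree queries charged once per subtree ever inserted into $\TT$ for segments of $P_t$), followed by multiplying the resulting $O(n(1+|\PP|\log n))$ query count by the $O(\log^3 n)$ query cost of $\DD$. The only addition beyond the paper's write-up is your explicit remark that $|\PP|$ never increases during the run, which the paper states earlier in Section 6 and implicitly relies on here.
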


From Definition \ref{definition:disjoint-tree-partition} we have that $|\PP|$ 
is bounded by $|U|$. Thus, we have the following theorem.

\begin{theorem}
An undirected graph can be preprocessed to build a data structure of 
$O(m \log n)$ size such that for any set $U$ of $k$ failed vertices or edges
(where $k \leq n$), the DFS tree of $G + U$ can be reported in 
$O(nk \log^4 n)$ time.
\label{theorem:deletions-only}
\end{theorem}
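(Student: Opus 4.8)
The plan is to derive Theorem \ref{theorem:deletions-only} directly from Lemma \ref{lemma:deletions-only} by bounding $|\PP|$. First I would recall that Lemma \ref{lemma:deletions-only} already gives the data structure of size $O(m\log n)$ and a reporting time of $O\big(n(1+|\PP|\log n)\log^3 n\big)$ for any set $U$ of $k \le n$ failed vertices or edges, where $\PP$ is the set of ancestor-descendant paths produced by the disjoint tree partitioning. So the only thing left to establish is a clean worst-case bound on $|\PP|$ in terms of $k = |U|$, and then substitute.

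The key step is the observation recorded in Definition \ref{definition:disjoint-tree-partition}, condition 1(ii), and reiterated after the partitioning construction: each update in $U$ contributes at most one path to $\PP$. Concretely, when a failed vertex $v$ lying in some tree $T'\in\TT$ is processed, a single ancestor-descendant path (from $par(v)$ to the root of $T'$) is added to $\PP$; when $v$ lies on an existing path, that path is merely split into two ancestor-descendant pieces, which does not increase the count beyond the one path already charged to that earlier update. Edge failures are handled identically by the imaginary-midpoint device described just before Lemma \ref{lemma:partition}. Hence $|\PP|\le |U| = k$. I would state this as the pivotal inequality and cite the partitioning argument rather than re-deriving it.

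Substituting $|\PP|\le k$ into the bound of Lemma \ref{lemma:deletions-only} gives reporting time
\[
O\big(n(1+k\log n)\log^3 n\big) = O\big(n\log^3 n + nk\log^4 n\big).
\]
Since $k\ge 1$ for any nontrivial update set, the first term is dominated by the second, yielding the claimed $O(nk\log^4 n)$ bound, while the space and preprocessing guarantees carry over unchanged from Lemma \ref{lemma:deletions-only}. This completes the theorem.

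I do not expect any genuine obstacle here, since the theorem is essentially a corollary: the substantive work (correctness via Property \ref{property:P1} and Lemma \ref{lem:sufficiency}, the size-of-$L$ analysis, and the partitioning construction) has already been done. The only point requiring minor care is the edge-case absorption of the additive $O(n\log^3 n)$ term into $O(nk\log^4 n)$, which is valid precisely because $k\ge 1$; I would flag this explicitly so the simplification is not mistaken for sloppiness.
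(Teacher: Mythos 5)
Your proposal is correct and follows exactly the paper's own route: the theorem is deduced from Lemma \ref{lemma:deletions-only} by substituting the bound $|\PP|\le |U|=k$ guaranteed by the disjoint tree partitioning (Definition \ref{definition:disjoint-tree-partition}). Your explicit remark that the additive $O(n\log^3 n)$ term is absorbed because $k\ge 1$ is a small point of care the paper leaves implicit, but the argument is the same.
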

It can be observed that Theorem \ref{theorem:deletions-only} directly
implies a data structure for fault tolerant DFS tree.

\subsection{Extending the algorithm to handle insertions}
\label{sec:insert}
In order to update the DFS tree, our focus has been to restrict the number
of edges that are processed. For the case when the updates are deletions only, 
we have been able to restrict this number to $O(nk \log n)$, for a given set of $k$ updates 
(failure of vertices or edges). 
We now describe the procedure to handle vertex and edge insertions.
Let $V_I$ be the set of  vertices inserted, and $E_I$ be the set of edges inserted. 
(including the edges incident to the vertices in $V_I$).
If there are $k$ vertex insertions, the size of $E_I$ is bounded by $nk$. 
So even if we add all the edges in $E_I$ to the reduced adjacency lists, 
the size of $L$ would still be bounded by $O(nk \log n)$. 
Hence, we perform the following two additional steps before starting the DFS traversal.
\begin{itemize}
\item Initialize $L(v)$ to store the edges in $E_I$ instead of $\emptyset$. That is,
$L(v) \leftarrow \{y~|~ (y,v) \in E_I\}$ 
\item Each newly inserted vertex is treated as a 
singleton tree and added to $\TT$. That is, $\TT \leftarrow \TT \cup \{ x| x\in V_I\}$.
\end{itemize}

In order to establish that our algorithm, after incorporating the insertions, correctly computes a DFS tree
of $G+U$, we need to ensure that all the edges {\em essential} for DFS traversal as described in 
the adapted components property 
are added to $L$. 
%
All the essential edges from $G$  are added to $L$ during the algorithm itself. 
In case an essential edge belongs to $E_I$, the edge has already been added to $L$ during its initialization. 
Note that the time taken by our algorithm remains unchanged since the size of $L$ remains bounded by $O(nk\log n)$.
This completes the proof of our main result stated in Theorem \ref{main-result}. 


Let us consider the case when $U$ consists of insertions only. In this case $\PP$ will be an empty set. 
As discussed above, we initialize the reduced adjacency lists using $E_I$ whose size is equal to $|U|$.
Additionally, since the vertices in $V_I$ would be added to the set of trees, $|V_I|$ 
would be added to $n$. Hence, Lemma \ref{lemma:deletions-only} implies the following theorem.

\begin{theorem}
An undirected graph can be preprocessed to build a data structure of $O(m \log n)$ size such that for any set 
$U$ of $k$ vertex insertions and $m'$ edge insertions, a DFS tree of $G + U$ can be reported in 
$O(m'+(n+k)\log^3 n)$ time.
\label{theorem:insertions-only}
\end{theorem}

\noindent
\textbf{Note:} In Theorem~\ref{theorem:insertions-only}, the size of input is $k+m'$.
Also, even a single insertion may change $\Omega(n)$ edges of the DFS tree. 
Hence our algorithm is optimal upto $\tilde{O}(1)$ factors for processing edge or vertex
insertions if the DFS tree has to be maintained explicitly.

\section{Fully dynamic DFS}
\label{sec:fullyDyn}

We now describe the overlapped periodic rebuilding technique to convert
our algorithm for computing a DFS tree after $k$ updates to fully dynamic 
and incremental algorithms for maintaining a DFS tree. Similar technique was used by 
Thorup \cite{Thorup05} for maintaining fully dynamic all pairs shortest paths. 
 
In the fully dynamic model, we need to report 
the DFS tree after every update in the graph. 
Given the data structure $\DD$ built using the DFS tree of the graph $G$, 
we are able to report the DFS tree of $G+U$ after $|U|=k$ updates 
in $\tilde{O}(nk)$ time. This becomes inefficient if $k$ becomes large.
Rebuilding $\DD$ after every update is also inefficient as it takes $\tilde{O}(m)$ time to build $\DD$. 
Thus, it is better to rebuild $\DD$  after every $|U'|=c$ updates for a carefully chosen $c$.
Let $\DD'$ be the data structure built using the DFS tree of the updated graph $G+U'$ with $|U'|=c$.
$\DD'$ can thus be used to process the next $c$ updates efficiently (see Figure \ref{fig:replace} (a)).
The cost of building $\DD'$ can thus be amortized over these $c$ updates. 

To achieve an efficient worst case update time, we divide the building of $\DD'$ 
over the first $c$ updates. This $\DD'$ is then used by our algorithm in the next $c$ updates, during which 
a new $\DD''$ is built in a similar manner and so on (see Figure \ref{fig:replace} (b)). 
The following lemma describes how this technique can 
be used in general for any dynamic graph problem. For notational convenience we denote any function $f(m,n)$ as $f$.

\begin{lemma}
Let $D$ be a data structure that can be used to report the solution of a graph problem after a set of $U$ updates 
on an input graph $G$. If $D$ can be build in $O(f)$ time and the solution for graph $G+U$ can be reported in 
 $O(h+|U|\times g)$ time, then $D$ can be used to report the solution after every update in worst case
$O(\sqrt{fg}+h)$ update time, given that $\sqrt{f/g}\leq n$.
\end{lemma}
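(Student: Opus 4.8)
The plan is to instantiate the overlapped periodic rebuilding scheme sketched just before the lemma, and to fix the rebuild period $c$ by balancing the two sources of cost: the per-update share of the construction cost and the report cost, which grows with the number of accumulated updates. I would partition the online update sequence into consecutive blocks of $c$ updates each. The data structure used during any block is always one that was built for an \emph{earlier} graph snapshot, and the invariant I want to maintain is that this snapshot is never more than $2c$ updates behind the current graph; this is precisely what keeps every report cheap.

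Concretely, let $G^{(t)}$ denote the graph after the first $t$ updates, and let block $B_i$ consist of updates $ic+1,\ldots,(i+1)c$. First I would describe the overlap. During block $B_{i-1}$ the algorithm uses a data structure $D_{i-1}$ that is already available, and \emph{at the same time} it builds, incrementally, a fresh data structure $D_i$ for the snapshot $G^{((i-1)c)}$ taken at the start of $B_{i-1}$. Since building $D_i$ costs $O(f)$ and there are $c$ updates over which to spread this work, the construction contributes only $O(f/c)$ per update. The reason for building $D_i$ for the snapshot at the \emph{start} of the block, rather than for the (still unknown) graph at its end, is that $G^{((i-1)c)}$ is already fixed when $B_{i-1}$ begins (indeed its reported solution is available, having just been computed at the end of $B_{i-2}$), so $D_i$ is guaranteed to be ready by the moment block $B_i$ starts.

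Next I would bound the report cost. Throughout block $B_i$ the algorithm answers queries using $D_i$, which was built for $G^{((i-1)c)}$. The current graph during $B_i$ differs from that snapshot only by the updates in blocks $B_{i-1}$ and $B_i$, so the accumulated update set $U$ satisfies $|U|\le 2c$. By hypothesis the solution for $G+U$ can then be reported in $O(h+|U|\,g)=O(h+c\,g)$ time. Hence the total worst-case time charged to a single update is
\[
O\!\left(h + c\,g + \frac{f}{c}\right).
\]

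Finally I would optimize over $c$. The terms $c\,g$ and $f/c$ are balanced when $c=\Theta(\sqrt{f/g})$, at which point each is $\Theta(\sqrt{fg})$, giving worst-case update time $O(\sqrt{fg}+h)$. The side condition $\sqrt{f/g}\le n$ guarantees $c\le n$; choosing the constant in $c=\Theta(\sqrt{f/g})$ so that $2c\le n$ keeps $|U|\le 2c\le n$, which is exactly the regime ($k\le n$) in which the assumed report bound $O(h+|U|\,g)$ is valid. This explains the role of that hypothesis. The main thing to get right is the staleness invariant in the overlap step, namely the argument that building $D_i$ for the block's \emph{starting} snapshot keeps it at most $2c$ updates behind while still finishing on time; once that is pinned down, the remainder is a routine balancing of the two costs.
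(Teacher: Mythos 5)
Your proof is correct and follows essentially the same approach as the paper: overlapped periodic rebuilding with block length $c=\Theta(\sqrt{f/g})$, using a data structure that is at most $2c$ updates stale while the next one is built incrementally over the current block. If anything, you are slightly more careful than the paper in noting that the staleness is $2c$ rather than $c$ and in adjusting the constant so that $2c\leq n$ keeps the report bound applicable.
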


\begin{figure}[ht]
\centering
\includegraphics[width = \linewidth]{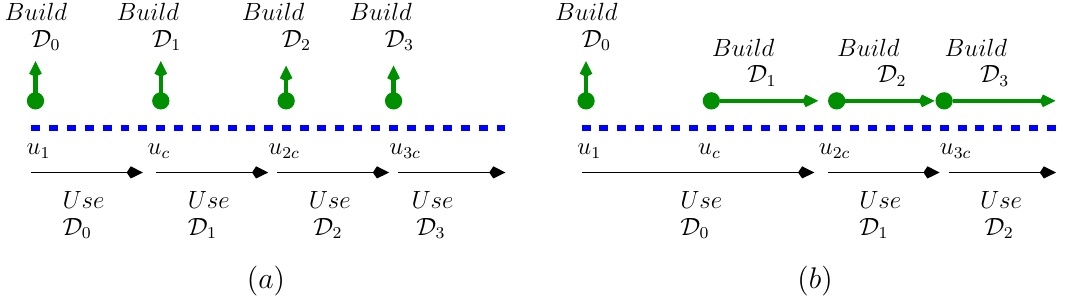}
\caption{
(a) Fully dynamic algorithm with amortized update time. 
(b) De-amortization of the algorithm.}
\label{fig:replace}
\end{figure}


\begin{proof}
We first present an algorithm that achieves amortized $O(\sqrt{fg}+h)$ 
update time. It is based on the simple idea of periodic rebuilding. Given 
the input graph $G_0$ we preprocess it to compute the data structure $D_0$ over
it. Now, let $u_1,...,u_c$ ($c\leq n$) be the sequence of first $c$ updates 
on $G_0$. To report the solution after $i^{th}$ update we use $D_0$ to compute 
the solution for $G_0+\{u_1,...,u_i\}$. This takes $O(h+(i\times g))$ time. 
So the total time for preprocessing and handling the first $c$ updates is 
$O(f+\sum_{i=1}^c h+(i\times g))$. Therefore, the average time for the first 
$c$ updates is $O(f/c+c\times g+h)$. Minimizing this quantity over $c$ gives
the optimal value $c_0=\sqrt{f/ g}$ which is bounded by $n$. So, after every
$c_0$ updates we rebuild our data structure and use it for the next $c_0$ updates 
(see Figure \ref{fig:replace}(a)). Substituting the value of $c_0$ gives the
amortized time complexity as $O(\sqrt{fg}+h~)$.

The above algorithm can be de-amortized as follows. Let $G_1,G_2,G_3,\ldots$ 
be the sequence of graphs obtained after $c_0,2c_0,3c_0,..$ updates. 
We use the data structure $D_0$ built during preprocessing to handle the first 
$2c_0$ updates. Also, after the first $c_0$ updates we start building the data 
structure $D_1$ over $G_1$. This $D_1$ is built in $c_0$ steps, thus the
extra time spent per update is $f/c_0=O(\sqrt{fg})$ only. 
We use $D_1$ to handle the next $c_0$ updates on graph $G_2$, and also in 
parallel compute the data structure $D_2$ over the graph $G_2$. 
(See Figure \ref{fig:replace}(b)). Since the time for
building each data structure is now divided in $c_0$ steps, we have that the
worst case update time as $O(\sqrt{fg}+h~)$.
\end{proof}


The above lemma combined with Theorems \ref{main-result} and 
\ref{theorem:insertions-only} directly implies the following results for the 
fully dynamic DFS tree problem and the incremental DFS tree problem, respectively.

(For the following theorem we use Theorem~\ref{main-result}, implying $f=m\log n$, $g=n\log^4 n$ and $h=0$.)
\begin{theorem}
There exists a fully dynamic algorithm for maintaining a DFS tree in an
undirected graph that uses $O(m\log n)$ preprocessing time and can report a 
DFS tree after each update in the worst case $O(\sqrt{mn} \log^{2.5} n)$ time.
An update in the graph can be insertion / deletion of an edge as well as a
vertex.
\end{theorem}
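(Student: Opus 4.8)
The plan is to obtain the theorem as a direct instantiation of the overlapped periodic rebuilding lemma just proved, taking the data structure $D$ to be $\DD$ from Section~\ref{sec:data-structure}. First I would pin down the three parameters $f$, $g$, $h$ of the lemma. By Theorem~\ref{thm:DS} the structure $\DD$ can be constructed in $O(m\log n)$ time, so $f = m\log n$. By our main result, Theorem~\ref{main-result}, once $\DD$ is built on $G$ we can report a DFS tree of $G+U$ for any update set with $|U| = k \le n$ in $O(nk\log^4 n)$ time; this matches the required form $O(h + |U|\cdot g)$ with $h = 0$ and $g = n\log^4 n$. Crucially, Theorem~\ref{main-result} already covers all four kinds of updates --- insertion and deletion of both edges and vertices --- so no case analysis on the update type is needed here.

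Next I would verify the side condition $\sqrt{f/g}\le n$ required by the lemma, which governs the rebuild period $c_0 = \sqrt{f/g}$. Here $\sqrt{f/g} = \sqrt{(m\log n)/(n\log^4 n)} = \sqrt{m/(n\log^3 n)}$, and since $m = O(n^2)$ this is $O(\sqrt{n/\log^3 n}) = O(n)$, so the condition holds. This also confirms that the scheme is internally consistent with the hypothesis of Theorem~\ref{main-result}: in the de-amortized construction each copy of $\DD$ is used for at most $2c_0 \le n$ consecutive updates, so the bound $k\le n$ is never violated when that copy is queried.

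Finally I would read off the conclusion of the lemma. The worst case update time is $O(\sqrt{fg} + h) = O(\sqrt{(m\log n)(n\log^4 n)}) = O(\sqrt{mn\log^5 n}) = O(\sqrt{mn}\,\log^{2.5} n)$, while the preprocessing time is just the cost of building the initial copy of $\DD$, namely $O(m\log n)$. Since the lemma provides a fresh DFS tree after every single update, this establishes the fully dynamic guarantee for arbitrary online sequences of edge and vertex updates.

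I do not expect a genuine obstacle here: the real work has already been done in Theorem~\ref{main-result} (handling an arbitrary update set through disjoint tree partitioning and the components property) and in the rebuilding lemma (de-amortizing the periodic reconstruction of $\DD$). The only points demanding care are the bookkeeping checks above --- matching $f,g,h$ correctly and confirming $\sqrt{f/g}\le n$ so that each rebuilt structure is only ever asked for update sets of size at most $n$.
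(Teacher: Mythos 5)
Your proposal matches the paper's own argument exactly: the theorem is obtained by instantiating the overlapped periodic rebuilding lemma with $f=m\log n$, $g=n\log^4 n$, $h=0$ (via Theorem \ref{thm:DS} and Theorem \ref{main-result}), giving worst case update time $O(\sqrt{mn}\log^{2.5}n)$. Your extra check that $\sqrt{f/g}\le n$ is a correct and welcome verification of the lemma's side condition, which the paper leaves implicit.
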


(For the following theorem we use Theorem~\ref{theorem:insertions-only}, implying 
$f=m\log n$, $g=\log^3 n$ and $h=n\log^3 n$.)
\begin{theorem}
There exists an incremental algorithm for maintaining a DFS tree in an
undirected graph that uses $O(m\log n)$ preprocessing time and can report a 
DFS tree after each edge insertion in the worst case $O(n \log^3 n)$ time.
\end{theorem}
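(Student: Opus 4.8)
The plan is to obtain this theorem as a direct instantiation of the overlapped periodic rebuilding lemma proved just above, using the insertion-only bound of Theorem~\ref{theorem:insertions-only} as the per-batch reporting routine. First I would recall the two ingredients: the data structure $\DD$ of Theorem~\ref{thm:DS} is built in $O(m\log n)$ time, and Theorem~\ref{theorem:insertions-only} reports a DFS tree of $G+U$ in $O(n\log^3 n)$ time whenever $U$ is a set of at most $n$ edge insertions. The crucial feature to exploit here, in contrast to the fully dynamic setting, is that this reporting time does \emph{not} grow with $|U|$ (since $\PP$ stays empty when all updates are edge insertions). This lets me take the marginal per-update cost in the lemma to be as small as $\log n$, pushing essentially all of the cost into the additive term.

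Concretely, I would set $f=m\log n$, $g=\log n$, and $h=n\log^3 n$ in the lemma. The first thing to check is that the reporting time of Theorem~\ref{theorem:insertions-only} legitimately fits the required form $O(h+|U|\cdot g)$: for $|U|=k\le n$ we have $h+kg = n\log^3 n + k\log n$, and since $k\log n\le n\log n\le n\log^3 n$, this is $O(n\log^3 n)$, matching the stated bound. Next I would verify the feasibility hypothesis $\sqrt{f/g}\le n$; here $\sqrt{f/g}=\sqrt{m}\le n$ because $m\le n^2$. This is precisely where the choice $g=\log n$ matters: any substantially smaller $g$ would force the rebuild period $\sqrt{f/g}$ above $n$ for dense graphs and violate the hypothesis, whereas $g=\log n$ is essentially the smallest value compatible with $m$ as large as $n^2$.

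With both hypotheses verified, the lemma yields worst-case update time $O(\sqrt{fg}+h)$. Substituting gives $\sqrt{fg}=\sqrt{m\log n\cdot\log n}=\sqrt{m}\,\log n$, and since $\sqrt{m}\le n$ this term is at most $n\log n$, which is dominated by $h=n\log^3 n$; hence the update time simplifies to $O(n\log^3 n)$, as claimed, with preprocessing time $O(m\log n)=O(f)$. Since the quantitative heart of the result already lives in the lemma and in Theorem~\ref{theorem:insertions-only}, I do not expect any deep difficulty. The one point requiring genuine care is the compatibility of periodic rebuilding with an \emph{online} insertion sequence: after each rebuild we reconstruct $\DD$ on the current graph together with its current DFS tree, so that the edges inserted since that rebuild form exactly an ``edge insertions only'' update set to which Theorem~\ref{theorem:insertions-only} applies verbatim. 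I expect this bookkeeping, rather than the arithmetic, to be the main thing to state carefully.
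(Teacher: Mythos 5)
Your proposal matches the paper's own derivation exactly: the paper also instantiates the overlapped periodic rebuilding lemma with $f=m\log n$, $g=\log n$, $h=n\log^3 n$, checks $\sqrt{f/g}=\sqrt{m}\le n$, and observes that $\sqrt{fg}=\sqrt{m}\log n$ is dominated by $h=n\log^3 n$. Your additional remarks on fitting $O(n\log^3 n)$ into the form $O(h+|U|\cdot g)$ and on rebuilding $\DD$ against the current DFS tree are correct and consistent with the paper.
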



\section{Applications}
\label{sec:appn}
Our fully dynamic algorithm for maintaining a DFS tree can  be used to solve 
various dynamic graph problems such as dynamic subgraph connectivity, biconnectivity and 2-edge connectivity.
Note that these problems are solved trivially using a DFS tree in the static setting.
Let us now describe the importance of our result in the light of the existing results for these problems.

\subsection{Existing Results}
The dynamic subgraph connectivity problem is defined as follows. Given
an undirected graph, the status of any vertex can be switched between  
{\it active} and {\it inactive} in an update. 
For any online sequence of updates interspersed with queries, 
the goal is to efficiently answer each connectivity queries 
on the subgraph induced by the active vertices. 
This problem can be solved by using dynamic connectivity
data structures \cite{EppsteinGIN97,Frederickson85,HolmLT01,KapronKM13} that answer
connectivity queries under an online sequence of edge updates. This is 
because switching the state of a vertex is equivalent to $O(n)$ edge 
updates. Chan \cite{Chan06} introduced this problem and showed that it 
can be solved more efficiently. He gave an algorithm using FMM (fast 
matrix multiplication) that achieves $O(m^{0.94})$ amortized update time 
and $\tilde{O}(m^{1/3})$ query time. Later Chan et al. \cite{ChanPR08} 
presented a new algorithm that improves the amortized update time to 
$\tilde{O}(m^{2/3})$. 
They also mentioned the following 
among the open problems.

\begin{enumerate}
\item
Is it possible to achieve constant query time with worst case sublinear ($o(m)$)
update time ?
\item
Can non trivial updates be obtained for richer queries such as counting
the number of connected components ?
\end{enumerate}

Duan \cite{Duan10} partially answered the first question affirmatively but at the
expense of a much higher update time and non-constant query time. He presented an 
algorithm with $O(m^{4/5})$ worst case update time
and $O(m^{1/5})$ query time, improving the worst case bounds for the problem.
Kapron et al. \cite{KapronKM13} presented a randomized algorithm for fully dynamic connectivity
which takes $\tilde{O}(1)$ time per update and answers the query correctly with high probability in $\tilde{O}(1)$ time, 
giving a Monte Carlo algorithm for subgraph connectivity with worst case $\tilde{O}(n)$ update time. 
Thus, their result answered the first question in a randomized setting.
However, in the deterministic setting both these questions were still open.
Our result answers both these questions affirmatively for the deterministic setting as well. 
Our fully dynamic algorithm directly provides an 
$\tilde{O}(\sqrt{mn})$ update time and $O(1)$ query time algorithm for 
the dynamic subgraph connectivity problem. Our algorithm maintains the
number of connected components simply as a byproduct. In fact, our fully dynamic
algorithm for DFS tree solves a generalization of dynamic subgraph connectivity 
- in addition to just switching the status of vertices, it allows insertion of 
new vertices as well. 
Hence the existing results offer different trade-offs between the update time and the query time, 
and differ on the types (amortized or worst case) of update time and the types (deterministic or randomized) of query time. 
Our algorithm, in particular, improves the deterministic worst 
case bounds for the problem (see Figure \ref{figure:subGraph}).
%
%
Further, unlike all the previous algorithms for dynamic subgraph connectivity, which use heavy 
machinery of existing dynamic algorithms, our algorithm is arguably much 
simpler and self contained. 


\begin{figure}[!ht]
\centering
\def\arraystretch{0.6}
\begin{tabular}{| m{4.5cm} | m{3.5cm} | m{2.2cm} |}
\hline
 {\bf References }							& {\bf Update Time} 					& {\bf Query Time} 		\\ \hline
Frederickson \cite{Frederickson85} (1985), 
Eppstein et. al \cite{EppsteinGIN97} (1997)	& $O(n\sqrt{n})$  					& $O(1)$ 	\\ \hline
Holm et al. \cite{HolmLT01} (2001)			& $\tilde{O}(n)$ amortized  			& $\tilde{O}(1)$ 		\\ \hline
Chan  \cite{Chan06} (2006) 					& $\tilde{O}(m^{0.94})$ amortized	& $\tilde{O}(m^{1/3})$ 	\\ \hline
Chan et al. \cite{ChanPR08} (2008) 			& $\tilde{O}(m^{2/3})$ amortized		& $\tilde{O}(m^{1/3})$ 	\\ \hline
Duan \cite{Duan10} (2010) 					& $\tilde{O}(m^{4/5})$ 				& $\tilde{O}(m^{1/5})$ 	\\ \hline
Kapron et al. \cite{KapronKM13} (2013)		& $\tilde{O}(n)$ 					& $\tilde{O}(1)$ \newline(Monte Carlo)\\ \hline
{\bf New\hfill } 								& $\tilde{O}(\sqrt{mn})$ 			& $O(1)$ 	\\ \hline
\end{tabular}
\caption{Current-state-of-the-art of the algorithms for the dynamic subgraph 
connectivity.} 
\label{figure:subGraph}
\end{figure}

Exploiting the rich structure of DFS trees, we also obtain 
$\tilde{O}(\sqrt{mn})$ update time algorithms for dynamic 
biconnectivity and dynamic 2-edge connectivity under vertex updates in a seamless 
manner. These problems have mainly been studied in the dynamic setting under edges updates.
Some of these results also allow insertion and deletion of isolated vertices. Our result, on the other
hand does not impose any such restriction on insertion or deletion of vertices.
Figure \ref{figure:subGraphB2E} illustrates our results and the existing results in the right perspective.
We now describe how our algorithm can be used to solve these problems.

\begin{figure}[h!]
\centering
\def\arraystretch{0.6}
\begin{tabular}{| m{4.5cm} | m{3.5cm} | m{2cm} | m{1.5cm} |}
\hline
{\bf References }							& {\bf Update Time} 					& {\bf Query Time} 		\\ \hline
Frederickson \cite{Frederickson85} (1985), 
Eppstein et. al \cite{EppsteinGIN97} (1997)\hfill$\dagger$& $O(n\sqrt{n})$  					& $O(1)$		\\ \hline
Henzinger \cite{Henzinger00} (2000)\hfill$*$				& $\tilde{O}(n\sqrt{n})$				& $O(1)$		\\	\hline
Holm et al. \cite{HolmLT01} (2001)\hfill$*\dagger$  		& $\tilde{O}(n)$ amortized  			& $\tilde{O}(1)$ \\ \hline
{\bf New\hfill $*\dagger$} 								& $\tilde{O}(\sqrt{mn})$ 			& $O(1)$ 	\\ \hline
\end{tabular}
\caption{Current-state-of-the-art of the algorithms for the dynamic biconnectivity $(*)$ and dynamic 2-edge connectivity $(\dagger)$ under vertex updates. }
\label{figure:subGraphB2E}
\end{figure}

\subsection{Algorithm}
The solution of dynamic subgraph connectivity follows seamlessly from our fully dynamic algorithm as follows.
As mentioned in Section \ref{sec:prelim}, we maintain a DFS tree rooted at a dummy vertex $r$, such that
the subtrees hanging from its children corresponds to the connected components of the graph. 
Hence, the connectivity query for any two vertices 
can be answered by comparing their ancestors at depth two (i.e. children of $r$).
This information can be stored for each vertex and updated whenever the DFS tree is updated.
Thus, we have a data structure for subgraph connectivity with worst case $\tilde{O}(\sqrt{mn})$ update time 
and $O(1)$ query time.
%
Our fully dynamic DFS algorithm can be extended to solve fully dynamic biconnectivity 
and 2-edge connectivity under 
vertex updates as follows. 

A set $S$ of vertices in a graph is called a {\em biconnected component} if it is a maximal set of vertices 
such that on failure of any vertex $w$ in $S$, the vertices of $S\setminus\{w\}$ 
remains connected. Similarly, a set $S$ is said to be {\em 2-edge connected component} if it is a
maximal set of vertices such that the failure of any edge with both endpoints in $S$
does not disconnect any two vertices in $S$. 
The biconnectivity and 2-edge connectivity queries can be answered easily by finding \emph{articulation points} and 
\emph{bridges} of the graph. It can be shown \cite{CormenLRS09} that two vertices 
belong to same biconnected component if and only if
the path connecting them in a DFS tree of the graph 
does not pass through any \emph{articulation point}. 
Similarly, two vertices belong to same 2-edge connected component if and only if 
the path connecting them in a DFS tree of the graph 
does not have a \emph{bridge}.
An articulation point and a bridge of a graph can be defined as follows:
\begin{definition}
Given a graph $G=(V,E)$, a vertex $v\in V$ is called an articulation point of $G$ 
if there exist a pair of vertices $x,y\in V$ such that every path between $x$ 
and $y$ in $G$ passes through $v$.
\end{definition}

\begin{definition}
Given a graph $G=(V,E)$, an edge $e\in E$ is called a bridge of $G$ if there exist
a pair of vertices $x,y\in V$ such that every path between $x$ and $y$ in $G$ 
passes through $e$.
\end{definition}

The articulation points and bridges of a graph can be easily computed by using DFS 
traversal of the graph. Given a DFS tree $T$ of an undirected graph $G$, we can 
index the vertices in the order they are visited by the DFS traversal. This index is 
called the \emph{DFN number} of the vertex. The \emph{high number} of a vertex $v$ 
is defined as the lowest DFN number vertex from which there is an edge incident to 
$T(v)$. Now, any non-root vertex $v$ will be an articulation point of the graph if 
high number of at least one of its children is equal to $DFN(v)$. The root $r$ of 
the DFS tree $T$ will be an articulation point if it has more than one child.
An edge $(x,y)$ of the DFS tree, where $x=par(y)$,  will be a bridge if 
the high number of $y$ is $DFN(x)$ and the high number of 
each child of $y$ (if any) is equal to $DFN(y)$. 
Thus,
given the high number of each vertex in the DFS tree, the articulation points and
bridges can be determined in $O(n)$ time.

We can augment our fully dynamic 
DFS algorithm with an additional procedure to compute high number of each vertex using the same
time bounds. 
For this we show that given any set of $k$ updates to graph $G$, 
while computing the new tree $T^*$ we also compute the high number of each vertex 
in $O(nk\log^4 n)$ time.
For each vertex $x$, let $a(x)$ denote the highest 
ancestor of $x$ in $T^*$ such that $(x,a(x))$ is an edge in $G+U$. Note that if 
$(x,a(x))$ is a newly added edge, then it can be easily computed by scanning all 
the new edges added to the graph. This is due to fact that the total number of new 
edges added to $G$ is bounded by $nk$. So we restrict ourselves to the case when 
$(x,a(x))$ was originally present in the graph $G$. 
Recall that our algorithm computes $T^*$ by attaching paths to the partially grown tree. 
Let $P_t$ and $P_p$ be the set of paths attached to $T^*$ (during its construction) that originally 
belonged to $\TT$ and $\PP$ respectively. Further, path halving ensures that the size of $P_p$ is 
bounded by $k \log n$.
For each path $p_0\in P_t\cup P_p$, let $H(p_0)$ denote the
vertex in $p_0$ that is closest to $r$ in $T^*$.


We now present the procedure for constructing a subset $A(x)$ of neighbors of $x$
while computing $T^*$ in $O(nk\log^4 n)$ time,  
such that the following condition holds.

\begin{itemize}
 \item For a vertex $x$, if $a(x)\notin A(x)$, then there is some descendant
 $y$ of $x$ in $T^*$ such that $a(x)\in A(y)$.
\end{itemize}

It is easy to see that if we get such an $A(x)$ for each $x$,
then high number of each vertex can be computed easily by processing the vertices of $T^*$
in bottom-up manner. 
Now, depending upon whether paths containing $x$ and $a(x)$ belong
to set $P_p$ or $P_t$, we can have different cases described as follows.


\begin{figure}[!ht]
\centering
\includegraphics[width=.95\textwidth]{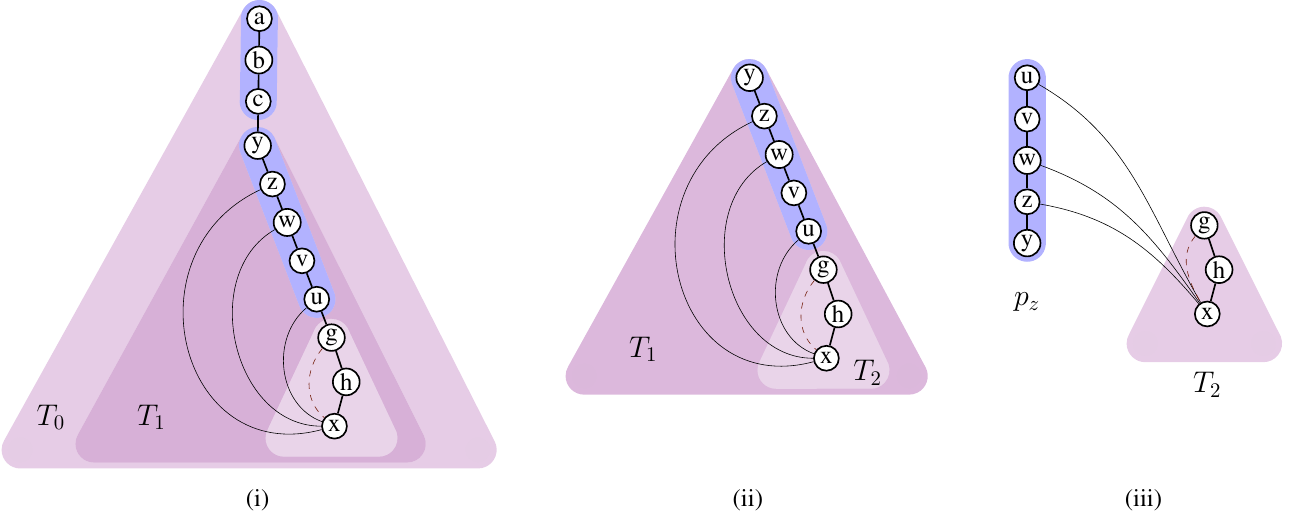}
\caption{(i) Before the beginning of algorithm vertex $x$ belongs to
tree $T_0\in \TT$, $z$ is the highest ancestor of $x$ in $T_0$ such that
$(x,z)$ is an edge.
(ii) The partitioning changes as the algorithm proceeds, $T_1(\in\TT)$ 
is the tree containing vertex $z$ just before it is attached to $T^*$.
(iii) A path containing vertex $z$ (i.e. $p_z$) is extracted from $T_1$
and attached to $T^*$. If $a(x)$ belongs to $T_0$, then it is the highest neighbor of $x$ in $p_z$.}
\label{figure:bi-connectivity}
\end{figure}

\begin{enumerate}
\item Vertex $a(x)$ lies on a path in $P_p$

For every vertex $v\in V$ and each path $p_0\in P_p$, we query 
$\cal D$ to compute the edge $(u,v)$ where $u$ is closest 
to $H(p_0)$ on path $p_0$, and add $u$ to $A(v)$. 
Note that if $a(x)$ lies on $p_0$, for $v=x$ the computed vertex 
$u$ will be same as $a(x)$. 

\item Vertex $x$ lies on a path in $P_p$

For each $u\in T^*$ and $p_0\in P_p$, we query $\cal D$ for an edge
$(u,y)$ such that the 
endpoint $y$ is farthest from $H(p_0)$ on path $p_0$. We add $u$ to $A(y)$. 
Now, consider a vertex $x$ on $p_0$ such that $a(x)=u$. If $x$ is equal to
$y$, then we have added $a(x)$ (i.e. $u$) to $A(x)$. If $x$ is not equal to
$y$, then we have added $a(x)$ (i.e. $u$) to $A(y)$ where $y$ is descendant 
of $x$ in $T^*$. 

\item Vertex $x$ and $a(x)$ lies on same path in $P_t$

For every vertex $v\in p_0$ for a path $p_0\in P_t$, 
we query $\cal D$ to compute the edge $(u,v)$ where $u$ is closest 
to $H(p_0)$ on path $p_0$, and add $u$ to $A(v)$. 
Note that for $x=v$, if $a(x)$ also lies on $p_0$, then $u$  
will be same as $a(x)$. 

\item Vertex $x$ and $a(x)$ lies on different paths in $P_t$

Let $x$ belong to $T_0$ in the initial disjoint tree partitioning $\TT\cup\PP$.
We claim that $a(x)$ would also belong to same tree $T_0$. 
This is because disjoint tree partitioning ensures the absence of edges between two subtrees in $\TT$. 
Let $z$ be the highest ancestor of $x$ in $T_0$ such that $(x,z)$ is an edge in $G+U$. 
Let $p_z$ be the path in $P_t$ containing vertex $z$. 

We now prove that $a(x)$ belongs to $p_z$. Recall that as the algorithm 
proceeds, our partitioning $\PP\cup \TT$ evolves with time.
Let $T_1$ be the tree in $\TT$ containing vertex $z$ just before $p_z$ is attached to $T^*$. 
Then $T_1$ is either same as $T_0$, or a subtree of $T_0$ (see Figure \ref{figure:bi-connectivity} (i)). 
Also, $a(x)$ must lie in tree $T_1$ since it cannot be an ancestor of $z$ in $T_0$. 
Now, let $T_2$ be the tree containing $x$ which is obtained on removal of $p_z$ from $T_1$.
Since $z$ is an ancestor of $x$ in $T_0$, the vertices in $T_2$ will eventually hang from some descendant of $z$ (not necessarily proper) in $T^*$.
For $a(x)$ to be the highest neighbor of $x$ in $T^*$, it should be an ancestor of $z$ in $T^*$, which is only possible if $a(x)\in p_z$.

Therefore, for each vertex $x$ belonging to a tree $T_0$ in $\TT$, we calculate
the highest ancestor $z$ of $x$ in $T_0$ such that $(x,z)$ is an edge in $G+U$.
We compute a list $l(z)$ that consist of all the vertices $x$ whose highest ancestor in $T_0$ is $z$.
Now, when $p_z$ is added to $T^*$, we process $l(z)$ as follows.
For every $v\in l(z)$, we query $\cal D$ for an edge $(u,v)$ 
where $u$ is closest to $H(p_z)$ on path 
$p_z$, and add $u$ to $A(v)$. Note that if $a(x)$ also lies in $T_0$, then $u$  
must be same as $a(x)$ (see Figure \ref{figure:bi-connectivity} (iii)). 

\end{enumerate}

Now, in the first two steps the total time taken is dominated by the number of 
queries between each path in $P_p$ and the vertices in $T$, i.e., $|P_p|\times n\times \log^3 n=O(nk\log^4n)$.
In the last two steps the total time taken is dominated by a single query for each vertex in $T$,
i.e., $n\times \log^3 n=O(n\log^3 n)$.
Thus, we have the following theorem.


\begin{theorem}
Given an undirected graph $G(V,E)$ with $|V|=n$ and $|E|=m$, we can maintain a
data structure for answering queries of biconnected components and 2 edge connectivity 
in a dynamic graph which takes $O(\sqrt{mn} \log^{2.5} n)$ update time, $O(1)$ query 
time and $O(m\log n)$ time for preprocessing.
\end{theorem}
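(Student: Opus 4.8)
The plan is to reduce both query types to the identification of \emph{articulation points} and \emph{bridges} of $G+U$, and to show that all the information needed for these can be extracted from the tree $T^*$ together with the \dfn{} and high numbers of its vertices. By the characterizations recalled above, a biconnectivity query on a pair $(x,y)$ reduces to testing whether $path(x,y)$ in $T^*$ contains an articulation point, and a $2$-edge-connectivity query reduces to testing whether that path contains a bridge; both the articulation points and the bridges are determined in $O(n)$ time once the high numbers are known. Hence it suffices to augment the fully dynamic DFS machinery so that, after any batch of $k$ updates, it reports not only $T^*$ but also the high number of every vertex within the same asymptotic budget, and maintains $O(n)$-size component labels that make queries $O(1)$.

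The core step is computing, for every vertex $x$, the value $a(x)$ --- its highest ancestor in $T^*$ joined to it by an edge of $G+U$ --- since the high numbers follow from the $a(x)$'s by a single bottom-up pass over $T^*$ in $O(n)$ time. Newly inserted edges contribute at most $nk$ candidates and are scanned directly, so I would focus on edges already present in $G$. Here I would invoke the four-case construction of the neighbor sets $A(x)$ developed above: by querying $\DD$ along the paths of $P_p$ and $P_t$ that tile $T^*$, one guarantees the invariant that either $a(x)\in A(x)$ or $a(x)\in A(y)$ for some descendant $y$ of $x$ in $T^*$, which is exactly what the bottom-up pass needs. Summing the query costs gives $O(nk\log^4 n)$ for the two cases where $x$ or $a(x)$ lies in $P_p$ (there are $O(k\log n)$ paths in $P_p$, each queried for every vertex) and $O(n\log^3 n)$ for the two $P_t$ cases, so the high numbers --- and therefore the articulation points, bridges, and component labels --- are all produced in $O(nk\log^4 n)$ time, matching the DFS-reporting bound.

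With a static-batch procedure that reports $T^*$ and all auxiliary structures in $O(nk\log^4 n)$ time from a data structure built in $O(m\log n)$ time, the final step is a mechanical application of the overlapped periodic rebuilding lemma of Section~\ref{sec:fullyDyn} with $f=m\log n$, $g=n\log^4 n$, and $h=0$ (the same parameters as for the fully dynamic DFS theorem), which deamortizes the batch cost into a worst-case per-update bound of $O(\sqrt{fg}+h)=O(\sqrt{mn}\log^{2.5} n)$. The $O(n)$ relabeling performed each time a tree is reported is absorbed into this term, and queries remain $O(1)$.

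I expect the main obstacle to be the correctness of Case~4, where $x$ and $a(x)$ lie on distinct paths of $P_t$. The delicate point is to argue that $a(x)$ in fact lies on the path $p_z$ containing $z$, the highest ancestor of $x$ inside the original subtree $T_0\in\TT$ joined to $x$ by an edge. This requires tracking how the partition $\PP\cup\TT$ evolves over the run: one must show that $a(x)$ remains in the tree $T_1$ holding $z$ just before $p_z$ is extracted, that the residual tree $T_2$ containing $x$ hangs below $z$ in $T^*$, and hence that any neighbor of $x$ strictly above $z$ in $T^*$ can only be reached through $p_z$. Establishing this evolution invariant rigorously --- rather than the routine query-counting --- is the step that carries the real content of the reduction.
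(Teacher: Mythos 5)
Your proposal is correct and follows essentially the same route as the paper: reduce both query types to articulation points and bridges via high numbers, compute each $a(x)$ through the four-case construction of $A(x)$ with the descendant invariant, bound the $P_p$ cases by $O(nk\log^4 n)$ and the $P_t$ cases by $O(n\log^3 n)$, and deamortize via overlapped periodic rebuilding with $f=m\log n$, $g=n\log^4 n$, $h=0$. The delicate point you flag in Case~4 is exactly the argument the paper supplies (tracking $T_0\supseteq T_1$, the extraction of $p_z$, and the residual tree $T_2$ hanging below $z$ in $T^*$), so nothing essential is missing.
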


\section{Lower Bounds}
\label{sec:LowerBounds}
We now prove two conditional lower bounds for maintaining a DFS tree under vertex or edge updates.

\subsection{Vertex Updates}
The lower bound for maintaining a DFS tree under vertex updates is based on Strong Exponential Time Hypothesis (SETH) as defined below:
\begin{definition}[SETH]
For every $\epsilon>0$, there exists a positive integer $k$, such that SAT on $k-$CNF formulas on $n$ variables cannot be solved in $\tilde{O}(2^{(1-\epsilon)n})$ time.
\end{definition}

Given an undirected graph $G$ on $n$ vertices and $m$ edges in a dynamic environment (incremental / decremental or fully dynamic) under vertex updates.
The status of any vertex can be switched between  {\it active} and {\it inactive} in an update. 
The goal of subgraph connectedness is to efficiently answer whether the subgraph induced by active vertices is connected.
Abboud and Williams\cite{AbboudW14} proved a conditional lower bound of $\Omega(n)$ per update based on SETH for answering dynamic subgraph connectedness queries. 
They proved that any algorithm for answering dynamic subgraph connectedness queries using arbitrary polynomial preprocessing time and $O(n^{1-\epsilon})$ amortized update time would essentially refute the SETH conjecture. They also proved that any algorithm for maintaining partially dynamic (incremental/decremental) subgraph connectedness using arbitrary polynomial preprocessing time and  $O(n^{1-\epsilon})$ worst case update time would essentially refute the SETH conjecture.

We present a reduction from subgraph connectedness to maintaining DFS tree under vertex updates requiring the 
algorithm to report whether the number of children of the root in any DFS tree of the subgraph is greater than 1. 
Thus, we establish the following:

\begin{theorem}
Given an undirected graph $G$ with $n$ vertices and $m$ edges undergoing vertex updates, an algorithm for maintaining DFS tree that can report the 
number of children of the root in the DFS tree with preprocessing time $p(m,n)$, update time $u(m,n)$ and query time $q(m,n)$ would imply an algorithm 
for subgraph connectedness with preprocessing time $p(m+n,n)$, update time $u(m+n,n)$ and query time $q(m+n,n)$.
\end{theorem}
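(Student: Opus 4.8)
The plan is to reduce subgraph connectedness on $G=(V,E)$ to the dynamic DFS problem using exactly the dummy-root construction from Section~\ref{sec:prelim}. First I would build an augmented graph $H$ by adjoining a single dummy vertex $r$ and joining it to every vertex of $V$; this graph has $n+1=O(n)$ vertices and $m+n$ edges, so initializing the DFS-maintenance structure on $H$ costs $p(m+n,n)$ time. The set $S$ of currently \emph{active} vertices will be precisely the set of non-root vertices present in the dynamic DFS instance, maintained under the invariant that the edge set of the maintained graph always equals $\{(u,w)\in E : u,w\in S\}\cup\{(r,v):v\in S\}$.

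Next I would fix the operation correspondence. Activating a vertex $v$ is realized as a vertex \emph{insertion} of $v$ that carries the edge $(r,v)$ together with every edge from $v$ to a currently active neighbour of $v$ in $G$; deactivating $v$ is realized as a vertex \emph{deletion} of $v$. Thus one toggle in subgraph connectedness maps to exactly one vertex update in the DFS instance, costing $u(m+n,n)$ time, and the type of dynamism is preserved: a purely incremental (resp.\ decremental) sequence of toggles maps to a purely incremental (resp.\ decremental) sequence of vertex updates, and a fully dynamic sequence maps to a fully dynamic one.

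The crux of the argument is the correspondence between the number of children of $r$ and connectivity of the active subgraph, which is immediate from the observation in Section~\ref{sec:prelim} that each subtree hanging from a child of $r$ is the DFS tree of one connected component of the maintained graph. Since $r$ is adjacent to every active vertex, the children of $r$ are in bijection with the connected components of $G[S]$; hence $r$ has exactly one child precisely when $G[S]$ is connected and non-empty, and more than one child precisely when $G[S]$ splits into two or more components. Consequently a single query reporting the child count of the root answers the connectivity query in $q(m+n,n)$ time, completing the reduction.

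The point that needs care — and the only real obstacle — is that the augmented graph $H$ is \emph{always} connected, because $r$ touches everything, so a naive connectivity test on $H$ carries no information; it is the child count of $r$, not the connectivity of $H$, that recovers the true component structure of $S$. I would therefore make the edge-set invariant explicit and verify that each insertion/deletion installs or removes exactly the incident edges to active neighbours and to $r$, so that the maintained graph always equals the claimed restriction. Given this, the stated time bounds follow verbatim from $|V(H)|=n+1$ and $|E(H)|=m+n$, and no property of the DFS-maintenance algorithm beyond its ability to report the root's child count is needed.
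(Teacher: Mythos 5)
Your proposal is correct and follows essentially the same route as the paper: adjoin a pseudo-root $r$ connected to all vertices (giving $n+1$ vertices and $m+n$ edges), simulate each activation/deactivation as one vertex update, and read off connectivity from whether $r$ has more than one child, using the fact that the subtrees of $r$'s children are exactly the components of the active subgraph. Your version is slightly more explicit than the paper's about the edge-set invariant and about deactivated vertices being removed entirely (so they cannot appear as spurious children of $r$), but the construction and the key observation are identical.
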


\begin{proof}
Given the graph $G$ for which we need to query for subgraph connectedness, we make a graph $G'$ as follows.
We add all vertices and edges of $G$ to $G'$. Further, add another vertex $r$ called as \emph{pseudo root} and connect it to all other vertices of $G'$.
Thus, $G'$ has $n+1$ vertices and $m+n$ edges.
Now, in any DFS tree $T$ of $G'$ rooted at $r$,  the number of children of $r$ will be equal to the number of components in $G$.
Here subtrees rooted on each child of $s$ represents a component of $G$.
Any change on $G$ can be performed on $G'$ and query for subgraph connectedness in $G$ is equivalent to querying if $r$ has more than 1 child in $T$.
\end{proof}

Thus, any algorithm for maintaining fully dynamic DFS under vertex updates with arbitrary preprocessing time and $O(n^{1-\epsilon})$ amortized update time would refute SETH. Also, any algorithm for maintaining partially dynamic DFS under vertex updates with arbitrary preprocessing time and $O(n^{1-\epsilon})$ worst case update time would refute SETH. 

\subsection{Edge Updates}
We now present a lower bound for maintaining a DFS tree under edge updates that holds for any algorithm which maintains tree edges of the DFS tree explicitly.
In the following example we prove that there exists a graph $G$ and a sequence of edge updates $U$, such that any DFS tree of the graph would 
require a conversion of $\Omega(n)$ edges from tree edges to back edges and vice-versa after every pair of updates in $U$. 

\begin{figure}[ht]
	\hspace{1cm}\includegraphics[width=.8\linewidth]{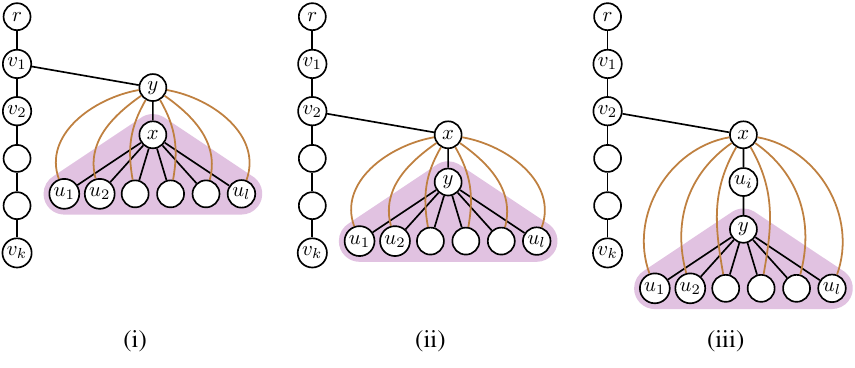}
	\caption{Worst Case Example for lower bound on maintaining DFS tree under fully dynamic edge updates.}

	\label{fig:worstCase}
\end{figure}

Consider the following graph for which a DFS tree rooted at $r$ is to be maintained under fully dynamic edge updates. 
There are $n/2$ vertices $u_1$,...,$u_l$ that have edges to vertices $x$ and $y$.
The remaining $n/2-3$ vertices $v_1$,..., $v_k$ are connected in form of a line as shown in Figure \ref{fig:worstCase}.
At any point of time one of $v_1,...,v_k$ (say $v_1$) is connected to either $x$ or $y$. 
The DFS tree for the graph is shown in Figure \ref{fig:worstCase} (i). Now, upon insertion of edge $(v_i,x)$ (say $i=2$) and deletion of edge $(v_1,y)$
the DFS tree will transform to either Figure \ref{fig:worstCase} (ii) or Figure \ref{fig:worstCase} (iii). Clearly $\Omega(n)$ edges are converted from tree edges to back edges and vice-versa. 
This can be repeated alternating between $x$ and $y$ ensuring that the new DFS tree 
requires $\Omega(n)$ after every two edge updates. 
Further, we repeat this for different $v_i$'s ensuring that the new DFS tree is not exactly 
the same as some previous DFS tree (thus memorization of the complete tree will not help).
Note that the same procedure can be applied to both the possible trees shown in Figure~\ref{fig:worstCase}(ii) and Figure~\ref{fig:worstCase}(iii).
Hence any algorithm maintaining tree edges explicitly takes $\Omega(n)$ time to handle such a pair of edge updates. 

\section{Conclusion}
\label{sec:Conclusion}
We have presented a fully dynamic algorithm for maintaining a DFS tree that takes worst case $\tilde{O}(\sqrt{mn})$ update time.
This is the first fully dynamic algorithm that achieves $o(m)$ update time.
In the fault tolerant setting our algorithm takes $\tilde{O}(nk)$ time to report a DFS tree, where $k$ is the number of 
vertex or edge failures in the graph. 
We show the immediate applications of 
fully dynamic DFS for solving various problems such as dynamic subgraph connectivity, biconnectivity and 2-edge connectivity.
We also prove the conditional lower bound of $\Omega(n)$ on maintaining DFS tree under vertex/edge updates.

	DFS tree has been extensively used for solving various graph problems in the static setting. 
Most of these problems are also solved efficiently in the dynamic environment.
However, their solutions have not used dynamic DFS tree. 
Furthermore, solutions to most dynamic graph problems under edge updates requires $o(n)$ update time.
However, this is not true for the vertex update variants of these problems.
In the light of $\Omega(n)$ lower bound for updating DFS under both edge and vertex updates, it becomes clear 
that dynamic DFS tree would be more applicable in dynamic graph problems under vertex updates.
The applications of our fully dynamic algorithm follows from the fact that it handles vertex updates 
which was not the case with the existing algorithms for maintaining DFS tree in any dynamic setting. 
This paper is thus an attempt to restore the glory of DFS trees for solving graph problems in the dynamic setting 
as was the case in the static setting.
We believe that our dynamic algorithm for DFS, on its own or after further
improvements/modifications, would encourage other researchers to use it in
solving various other dynamic graph problems.

\bibliography{paper}

\end{document}